\newcommand{\R}[0]{\mathds{R}}
\newcommand{\C}[0]{\mathds{C}}
\newcommand{\N}[0]{\mathds{N}}
\newcommand{\Z}[0]{\mathds{Z}}
\newcommand{\T}[0]{\mathds{T}}
\newtheorem{theorem}{Theorem}[section]
\newtheorem{corollary}[theorem]{Corollary}
\newtheorem{lemma}[theorem]{Lemma}
\newtheorem{proposition}[theorem]{Proposition}
\newtheorem{assumption}{Assumption}
\newtheorem{condition}{Condition}
\newtheorem{remark}[theorem]{Remark}
\newtheorem{definition}[theorem]{Definition}
\newcommand{\eref}[1]{(\ref{#1})}
\newcommand{\ol}[1]{\overline{#1}}
\newcommand{\ep}[0]{\varepsilon}
\newcommand{\osc}[0]{\mathrm{osc}}
\numberwithin{equation}{section}
\begin{document}

\title[complex geometric optics for symmetric hyperbolic systems II]{complex geometric optics for symmetric hyperbolic systems II: nonlinear theory in one space dimension}

\author{Omar Maj}

\address{Max-Planck-Institut f\"ur Plasmaphysik, D-85748 Garching, Germany}

\email{omaj@ipp.mpg.de}

\begin{abstract}
This is the second part of a work aimed to study complex-phase oscillatory solutions of nonlinear symmetric hyperbolic systems. We consider, in particular, the case of one space dimension. That is a remarkable case, since one can always satisfy the \emph{naive} coherence condition on the complex phases, which is required in the construction of the approximate solution. Formally the theory applies also in several space dimensions, but the \emph{naive} coherence condition appears to be too restrictive; the identification of the optimal coherence condition is still an open problem.
\end{abstract}

\keywords{Symmetric hyperbolic systems; Geometric optics; Complex phases.}

\maketitle

\section{Introduction}
\label{intro}

This paper constitutes the second part of a work dedicated to the analysis of oscillatory waves with complex phases, the theory of which is usually referred to as complex geometric optics in the applied literature. More specifically, in the first part (\cite{MI}, referred to as Part I throughout the paper), complex geometric optics for linear symmetric hyperbolic systems have been put forward as a preparatory study for nonlinear systems; now we move to the case of quasilinear first-order systems in a single spatial dimension. The general problem in several space dimensions is still open. The reason is that a coherence hypothesis on the complex phases is needed in order to control the resonance of waves, but the \emph{naive} condition, obtained naturally from the formal analysis, appears to be too strong; the identification of the optimal coherence hypothesis is closely connected to the formulation of the appropriate class of profiles, cf., section \ref{extension}.

When resonances occur, coherence conditions are crucial even for real-phase oscillatory waves, i.e., in the standard nonlinear geometric optics. Resonances have been clearly identified by Hunter and Keller \cite{HK} and the theory for resonantly interacting waves with real phases has been achieved by Majda and Rosales in one space dimension, \cite{MR}, and by Hunter, Majda and Rosales for several space dimensions, \cite{HMR}. An important result of such works is that, in several space dimensions, a strong coherence hypothesis on the phases is required in order to avoid hidden focusing, i.e., focusing of phases generated through resonant interactions; then, these results have been further developed and refined in the rigorous analysis of Joly, M\'etivier and Rauch, both in one \cite{JMRone} and several space dimensions \cite{JMR1}. A careful analysis of the coherence and focusing, including a number of clarifying examples, is also given by Joly, M\'etivier and Rauch \cite{JMR2}. An account of the main developments of the theory can be found in the lectures by Rauch \cite{R} and in the review by Dumas \cite{D}. 

In our case, the main simplification in one single spatial dimension is that no diffraction effects are present in the following sense. In complex geometric optics, diffraction is described as a coupling between the equations for the real and the imaginary part of the complex phase; in one space dimension such equations are decoupled. This allows us to make use of the \emph{naive} coherence condition which, in addition, turns out to be always satisfied. 

In several space dimensions, diffraction sets in; when the \emph{naive} coherence hypothesis is satisfied, the theory developed here is still applicable (after a straightforward generalization). However, one can readily recognize that such a hypothesis is violated even in very simple cases.

\section{Assumptions and Main Results}
\label{results}

Let us consider the quasilinear system,
\begin{equation}
\label{new:L}
L(t,x,u,\partial u) = \partial_t u + A(t,x,u)\partial_x u + F(t,x,u)=0,
\end{equation}
for $u(t,x) \in \C^N$ and $(t,x) \in  \R^2$. Here, $A(t,x,u)$ and $F(t,x,u)$ are smooth functions of $(t,x,u)$ with values in the space $\mathrm{End}(\C^N)$ of $N\times N$ complex matrices and in $\C^N$, respectively; particularly, smoothness with respect to the complex variables $u$ means that the functions are real-differentiable with respect to $(u,\ol{u})$, with the dependence on the complex conjugate $\ol{u}$ being always implied. We shall consider solutions in a bounded open set $\Omega \subset \R^2$ taking values in an compact set $K \subset \C^N$.

To some extent, we shall address the semilinear case separately; we recall that \eref{new:L} is called semilinear when the matrix $A$ is independent on the unknown $u$.

\begin{assumption}
\label{new:zero}
If the system \eref{new:L} is semilinear, $F(t,x,u)$ is a polynomial in $(u,\ol{u})$ with $F(t,x,0)=0$.
\end{assumption}

\begin{assumption}
\label{new:hyperb}
The system \eref{new:L} is symmetric and strictly hyperbolic in $\ol{\Omega}\times K$, that is, the matrix $A(t,x,u)$ is Hermitian in a neighbourhood of $\ol{\Omega}\times K$ and its eigenvalues $\lambda_l(t,x,u)$ are distinct with a uniform bound on the distance.
\end{assumption}

In the quasilinear case we consider solutions that are small perturbation of a known exact solution $u_0 \in C^\infty(\ol{\Omega};K)$ and we define the matrix $A_0(t,x) = A(t,x,u_0(t,x))$. Then, the principal symbol of the differential operator \eref{new:L} linearized around $u=u_0$ is
\begin{equation}
\label{new:symbol}
\sigma_{L_0}(t,x,\tau,\xi) = i(\tau + A_0(t,x)\xi), \quad (\tau,\xi) \in \R^2\setminus \{0\},
\end{equation}
where $L_0(t,x,\partial) =\partial_t + A_0(t,x)\partial_x$ denotes the principal part of the linearized operator; the same principal part is found for the semilinear case without linearization and with $A_0(t,x) = A(t,x)$.

Over $\ol{\Omega}$, the characteristic variety of $L_0$, i.e., the locus of $\det\sigma_{L_0}(t,x,\tau,\xi)=0$, amounts to a set of smooth disjoint submanifolds of $T^*\R^2\setminus 0$ (the cotangent bundle without the zero section) given by
\begin{equation*}
f_l(t,x,\tau,\xi) = \tau + \lambda_{l,0}(t,x)\xi = 0,
\end{equation*}
where $f_l$ are the distinct eigenvalues of $\sigma_{L_0}$ and $\lambda_{l,0}(t,x) =\lambda_{l}(t,x,u_0(t,x))$. Moreover, we note that the eigenspaces of $\sigma_{L_0}$ are exactly the same as the eigenspaces of $A_{0}$, thus, in particular, they do not depend on $\xi$. Let us denote by $\pi_l(t,x)$ be the projector on the $l$-th eigenspace.

As in the linear complex geometric optics \cite{MI}, for $T>0$ and $\underline{x} \in\R$, we shall address the Cauchy problem for \eref{new:L} on the domain of determinacy
\begin{equation*}
\ol{\Omega}=\{(t,x) \in \R^2; 0\leq t\leq T, |x-\underline{x}| \leq \rho - ct\},
\end{equation*}
where $\rho >cT$ and $c$ is the finite propagation speed for the system \eref{new:L} so that
\begin{equation*}
cI + A(t,x,u)x/|x| \geq 0, \quad (t,x,u)\in \ol{\Omega} \times K,
\end{equation*}
$I$ being the identity matrix. We shall denote $\ol{X^{t'}} = \ol{\Omega}\cap\{t=t'\}$ the space-like slices and by $X^t$ the interior of the closed interval $\ol{X^t}$. Data are given at $t=0$ in the form
\begin{equation}
\label{new:datum-quasi}
u^\ep_{|t=0}(x) = u_0(0,x) + \ep h^\ep(x), \qquad x \in \ol{X^o},
\end{equation}
in the quasilinear case and
\begin{equation}
\label{new:datum-semi}
u^\ep_{|t=0}(x) = h^\ep(x), \qquad x \in \ol{X^o},
\end{equation}
in the semilinear case with
\begin{equation*}
h^\ep(x) = \sum_{\mu=1}^m h_\mu(x) e^{i\psi_\mu(x)/\ep}, \quad \ep \in \R_+.
\end{equation*}
Here, $\psi = (\psi_\mu) \in C^\infty(\ol{X^o};\C^m)$ is a complex phase in $\C^m$ and $h_\mu \in C^\infty(\ol{X^o};\C^N)$ are the amplitudes. As in the linear theory we have the following definitions and assumptions. 

\begin{definition}[Complex phase]
\label{complex_phase}
For $m \geq 0$ integer, $\phi \in C^\infty(\ol{\mathcal{O}}; \C^m)$ in a bounded open set $\mathcal{O} \subset \R^n$ is a complex phase iff, for each component $\phi_\mu = \varphi_\mu +  \chi_\mu$, one has $d\varphi_\mu \not =0$ and $\chi_\mu \geq 0$ in $\ol{\mathcal{O}}$.
\end{definition}

\begin{assumption}
\label{Rzero}
The locus of $\mathrm{Im} \psi_{\mu}(x)=0$ amounts to the set of points $x^o_{\mu,\ell} \in X^o$, $\ell = 1,2,\ldots$, $x^o_{\mu,\ell} = x^o_{\mu',\ell'}$ iff $(\mu,\ell) = (\mu',\ell')$ and $d^2\mathrm{Im}\psi_\mu(x^o_{\mu,\ell}) >0$.
\end{assumption}

After splitting the amplitudes as appropriate, we can always write the initial datum so that the following condition is satisfied.

\begin{condition}
\label{pol}
For every $\mu$ there is $l=l(\mu)$ such that $\pi_l(0,x)h_\mu(x) = h_\mu(x)$.
\end{condition}

The standard nonlinear geometric optics \cite{R} applies when $\mathrm{Im}\psi_\mu \equiv 0$; then, one looks for asymptotic solutions to \eref{new:L} in the form
\begin{equation*}
u^\ep(t,x) \sim u_0(t,x) + \sum_{j=1}^{+\infty} \ep^j U_j\big(t,x,\varphi(t,x)/\ep\big), \quad \ep \to 0,
\end{equation*}
where $u_0 \in C^\infty(\ol{\Omega}; K)$ and $U_j \in C^\infty(\ol{\Omega} \times \R^m;\C^N)$ are $2\pi$-periodic in the last variable, i.e., $U_j(t,x,\theta + 2\pi g) = U_j(t,x,\theta)$ for $g\in\Z^m$, whereas $\varphi \in C^\infty(\ol{\Omega};\R^m)$ is the vector of the $m$ real-valued phases $\varphi_\mu (t,x)$, with $d\varphi_\mu \not =0$. In the semilinear case, one can set $u_0=0$ and let the sum start from $j=0$, thus, getting a fully nonlinear geometric optics solution. The functions $U_j(t,x,\theta)$, in particular, are called \emph{profiles}, and, in view of the periodicity, they are better defined as functions of class $C^\infty(\ol{\Omega}\times \T^m;\C^N)$ where $\T = \R/2\pi\Z$ is the torus; on representing such profiles by means of the Fourier series, one can see that they are generalization of the purely exponential profiles of the linear geometric optics, 
\begin{equation*}
u^\ep(t,x) = \sum a^\ep_\mu(t,x) e^{i\varphi_\mu(t,x)/\ep},
\end{equation*} 
with the main nonlinear effects accounted for, namely,
\begin{itemize}
\item[1.] \emph{generation of harmonics}: in addition to the fundamental harmonic $e^{i\theta_\mu}$, all the other harmonics $e^{ig_\mu\theta_\mu}$, $g_\mu \in\Z$ are accounted for; 
\item[2.] \emph{resonance of phases}: the $m$ phases $\varphi_\mu$ are dealt with all together so that the interaction among them can occur through terms of the form $\langle g,\varphi \rangle = \sum_\mu g_\mu \varphi_\mu$.
\end{itemize}

Analogously, we have to address the proper nonlinear generalization of the complex geometric optics ansatz, \cite[and references therein]{MI}, 
\begin{equation*}
u^\ep(t,x) = \sum a^\ep_\mu(t,x) e^{i\phi_\mu(t,x)/\ep}.
\end{equation*}
Here, $\phi=(\phi_\mu)_\mu$ is a vector of complex-valued phases, cf., definition \ref{complex_phase}. 

In section \ref{extension}, we shall define the space $PC_\osc^\infty(\ol{\Omega};\C^N)$ of oscillatory complex geometric optics profiles $\mathcal{U}(t,x,z)$; roughly speaking those are smooth functions of $(t,x) \in \ol{\Omega}$ and $z = \theta + i r \in \C^m$, with $\theta \in \T^m$ and $r \in \ol{\R}_+^m$, that can be written as a superposition of harmonics of the form $e^{i\langle g,\theta \rangle - \langle \gamma,r \rangle}$ for $(g,\gamma) \in \Z^m \times \N^m$ satisfying the conditions $g\not=0$ (oscillatory profiles) and $|g_\mu|\leq \gamma_\mu$.

For $p \geq 0$, we search for approximate solutions in the form
\begin{equation}
\label{new:ansatz}
v^\ep(t,x) = u_0(t,x) + \ep^p \mathcal{U}^\ep (t,x, \phi/\ep), \quad \mathcal{U}^\ep = \mathcal{U}^{(0)} + \ep \mathcal{U}^{(1)},
\end{equation}
where $\mathcal{U}^{(i)} \in PC_\osc^\infty(\ol{\Omega};\C^N)$, $i=1,2$, and where $\phi \in C^\infty(\ol{\Omega};\C^m)$ is a complex phase. We put $p=1$ in the quasilinear case and $u_0=0$, $p=0$ in the semilinear case.

We shall see in section \ref{extension} that the space of oscillatory profiles $PC^\infty_\osc$ is a subspace of the algebra $PC^\infty$ of generic complex geometric optics profiles, but it is not a subalgebra: oscillatory profiles exclude all the cases in which rectification effects \cite{Rect} are present, that is, when the nonlinear interaction of two oscillatory waves gives rise to a non-oscillatory wave that sums up to the background field $u_0$. This leads to a technical assumption on the nonlinear operators derived in the next section \ref{sec:formal-expansion}. 

\subsection{Formal expansion and rectification}
\label{sec:formal-expansion}

Upon substituting \eref{new:ansatz} into \eref{new:L}, straightforward calculations show that
\begin{equation}
\label{new:formal-expansion}
\begin{aligned}
L(t,x,v^\ep,\partial v^\ep) &= \ep^{p-1} P(t,x,\partial_z,\partial_{\ol{z}}) \mathcal{U}^{(0)}(t,x,\phi/\ep) \\
& \quad + \ep^p \big[P(t,x,\partial_z,\partial_{\ol{z}}) \mathcal{U}^{(1)}(t,x,\phi/\ep)
 + N(\mathcal{U}^{(0)})(t,x,\phi/\ep)\big] \\
& \quad + O(\ep^{p+1}),
\end{aligned}
\end{equation}
for $\ep \to 0$. Here, $p=1$ and $N(\mathcal{U})=\mathscr{B}_0(\mathcal{U})$ in the quasilinear case whereas $p=0$ and $N(\mathcal{U}) = L(t,x,\mathcal{U},\partial\mathcal{U})$ in the semilinear case. The foregoing expansion is obtained by the same formal calculations as in \cite{JMR1} with only two differences: we restrict our analysis to finite order accuracy and we find the derivatives $(\partial_z, \partial_{\ol{z}})$ instead of $\partial_\theta$. The linear and nonlinear operators occurring in equation \eref{new:formal-expansion} are
\begin{equation*}
P(t,x,\partial_z,\partial_{\ol{z}}) = \sum_{\mu=1}^m \Big[\big( \partial_t \phi_\mu + A_0(t,x)\partial_x\phi_\mu \big)\partial_{z_\mu} + \big( \partial_t \ol{\phi}_\mu  + A_0(t,x)\partial_x\ol{\phi}_\mu \big)\partial_{\ol{z}_\mu} \Big],
\end{equation*}
and
\begin{align*}
\mathscr{B}_0(\mathcal{U}) &= L_0(t,x,\partial) \mathcal{U} + \sum_{\mu=1}^m \Big[ \Big( \partial_u A(t,x,u_0) \mathcal{U} \frac{\partial \phi_\mu}{\partial x}\Big) \partial_{z_\mu}  \mathcal{U} \\
&\ + \Big( \partial_u A(t,x,u_0) \mathcal{U} \frac{\partial \ol{\phi}_\mu}{\partial x}\Big) \partial_{\ol{z}_\mu}  \mathcal{U} + \Big( \partial_{\ol{u}} A(t,x,u_0) \ol{\mathcal{U}} \frac{\partial \ol{\phi}_\mu}{\partial x}\Big) \partial_{\ol{z}_\mu}  \mathcal{U} \\ 
&\ +\Big( \partial_{\ol{u}} A(t,x,u_0) \ol{\mathcal{U}} \frac{\partial \phi_\mu}{\partial x}\Big) \partial_{z_\mu}  \mathcal{U} \Big] + \Big( \partial_u F(t,x,u_0)\mathcal{U} + \partial_{\ol{u}} F(t,x,u_0)\ol{\mathcal{U}} \Big)\\
&\ + \Big(\partial_u A(t,x,u_0)\mathcal{U} + \partial_{\ol{u}} A(t,x,u_0) \ol{\mathcal{U}} \Big)\partial_{x} u_0.
\end{align*}
We note that $\mathscr{B}_0$ exhibits a non-linearity of Burgers type due to the differentials $(\partial_u A_j,\partial_{\ol{u}}A_j)$. The nonlinear terms in $N(\mathcal{U})$ are polynomials in $\mathcal{U},\ol{\mathcal{U}}$ (in the semilinear case this follows from assumption \ref{new:zero}), thence $N(\mathcal{U})$ maps the algebra $PC^\infty$ into itself; we assume that the nonlinear terms are such that rectification does not occur, i.e., the oscillatory character of profiles is preserved.

\begin{assumption}
\label{rect}
The nonlinear operator $N:PC^\infty \to PC^\infty$ restricts to an operator $:PC^\infty_\osc \to PC^\infty_\osc$ still denoted by $N$.
\end{assumption}

In the present context, rectification behaves exactly in the same way as for periodic profiles of standard geometric optics; we refer to the paper by Joly, M\'etivier and Rauch \cite{Rect} for further details and examples.

\subsection{Complex phases}
\label{subsec:phases}

The complex phase $\phi(t,x)$ is obtained as follows. Let us consider the smooth vector fields,
\begin{equation*}
V_l(t,x) = \partial_t + \lambda_{l,0}(t,x)\partial_x, \quad V_\mu(t,x)=V_{l(\mu)}(t,x),
\end{equation*}
where $l(\mu)$ is defined in condition \ref{pol}. The integral lines of $V_l$ are just the standard geometric optics rays.

\begin{assumption}
\label{new:ref-man}
For every $\mu \in \{1,\ldots,m\}$, the integral line of $V_\mu$ passing through $x^o_{\mu,\ell}$ is defined in a neighbourhood of $[0,T]$ and crosses transversally at $X^T$ the lateral boundary $\partial \Omega \setminus \ol{X^o}$. We denote by $R_{\mu,\ell}$ the intersection of the integral line with $\ol{\Omega}$, then $\{R_{\mu,\ell}\}_{\mu,\ell}$ is a family of disjoint one-dimensional submanifolds of $\ol{\Omega}$.
\end{assumption}

It is worth noting that integral curves of $V_\mu$ are always transversal to the space-like sections $X^t$, hence, $R_{\mu,\ell}$ is of the form $(t,x_{\mu,\ell}(t))$. Upon defining $s=x-x_{\mu,\ell}(t)$ we obtain a coordinate patch $\kappa : \mathcal{O}_{\mu,\ell} \ni (t,x) \mapsto (t,s) \in [0,T] \times \mathcal{I}_s$, with $\mathcal{O}_{\mu,\ell}$ a relatively open neighbourhood of $R_{\mu,\ell}$ in $\ol{\Omega}$ and $\mathcal{I}_s$ an open interval in $\R$ containing zero. Such coordinates have the submanifold property for $R_{\mu,\ell}$, i.e., $R_{\mu,\ell} \cap \mathcal{O}_{\mu,\ell}$ is mapped into $\{s=0\}$, and, since the curves $R_{\mu,\ell}$ do not cross each other, we can take the sets $\mathcal{O}_{\mu,\ell}$ pairwise disjoint. Let us define the complex phase $\phi_\mu$ in each relatively open $\mathcal{O}_{\mu,\ell}$ by
\begin{equation}
\label{new:eikonal}
\phi_\mu(t,x) = \varphi_{\mu,\ell}(t) + \xi_{\mu,\ell}(t)s + \Phi_{\mu,\ell}(t)s^2/2, \quad s=x-x_{\mu,\ell}(t),
\end{equation}
where $\varphi_{\mu,\ell},\xi_{\mu,\ell} \in C^\infty([0,T];\R)$ and $\Phi \in C^\infty([0,T];\C)$ are unknown functions, cf., section 4 of Part I. We set $\varphi_0(t) = \psi(x^o_{\mu,\ell})$ which is real because of assumption \ref{Rzero}. The remaining unknowns are determined by the system of linear ordinary differential equations
\begin{equation*}
\left\{
\begin{aligned}
& \xi_{\mu,\ell}'(t) +\alpha(t)\xi_{\mu,\ell}(t)=0,\\
& \Phi'_{\mu,\ell}(t) +2 \alpha(t)\Phi_{\mu,\ell}(t) + \beta(t)\xi_{\mu,\ell}(t)=0,
\end{aligned}\right.
\end{equation*}
with initial values
\begin{equation*}
\xi_{\mu,\ell}(0) = d \psi_\mu(x^o_{\mu,\ell}), \quad \Phi_{\mu,\ell}(0) = d^2 \psi_{\mu}(x^o_{\mu,\ell}).
\end{equation*}
where $d\psi_\mu(x^o_{\mu,\ell})$ is real and $d^2\mathrm{Im}\psi_\mu(x^o_{\mu,\ell}) >0$. The coefficients are given by
\begin{equation*}
\alpha(t)= \partial \lambda_{l(\mu),0}/\partial x|_{R_{\mu,\ell}},\quad \beta(t)= \partial^2\lambda_{l(\mu),0}/\partial x^2|_{R_{\mu,\ell}}.
\end{equation*}
The foregoing construction of the complex phase $\phi$ is a special case of the more general procedure addressed in section 4 of Part I, cf., in particular, remark 4.2 of Part I. Here, it is worth noting that the nonlinear coupling between the real and imaginary parts of the phases is no longer present: this entails the fact that in one space dimension diffraction does not exists. 

The solution is global on $[0,T]$ and one can see that $\mathrm{Im} \Phi_{\mu,\ell}(t) >0$ in $[0,T]$, cf., also, remark 4.3 of Part I. The following result can be proved either directly or by proposition 4.4 of Part I.

\begin{proposition}
\label{new:riccati}
Let assumptions \ref{Rzero} and \ref{new:ref-man} be satisfied together with condition \ref{pol} and let $\phi_\mu$, $\mu \in \{1,\ldots,m\}$, be complex phases satisfying \eref{new:eikonal} in each $\mathcal{O}_{\mu,\ell}$. Then,
\begin{equation*}
V_\mu(t,x)\phi_\mu(t,x) = O(|s|^3), \qquad \phi_{\mu|t=0}(x)-\psi_\mu(x) = O(|s|^3),
\end{equation*}
in $\mathcal{O}_{\mu,\ell}$ and $\mathcal{O}^o_{\mu,\ell}=\mathcal{O}_{\mu,\ell}\cap X^o$, respectively.
\end{proposition}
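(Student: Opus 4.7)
My plan is to prove both estimates by a direct Taylor-expansion in the variable $s=x-x_{\mu,\ell}(t)$ along the integral curve $R_{\mu,\ell}$, checking that the three coefficients (in $s^0,s^1,s^2$) of $V_\mu\phi_\mu$ vanish because of the prescribed ODEs, and that the initial trace $\phi_{\mu|t=0}$ coincides with the second-order Taylor polynomial of $\psi_\mu$ at $x^o_{\mu,\ell}$.

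First, I will compute $V_\mu\phi_\mu$. Since $R_{\mu,\ell}$ is an integral line of $V_\mu$, we have $x_{\mu,\ell}'(t)=\lambda_{l(\mu),0}(t,x_{\mu,\ell}(t))$, and from \eref{new:eikonal} together with $s=x-x_{\mu,\ell}(t)$,
\[
\partial_t\phi_\mu = \varphi_{\mu,\ell}'(t) + \xi_{\mu,\ell}'(t)s + \tfrac{1}{2}\Phi_{\mu,\ell}'(t)s^2 - x_{\mu,\ell}'(t)\bigl(\xi_{\mu,\ell}(t)+\Phi_{\mu,\ell}(t)s\bigr),
\]
and $\partial_x\phi_\mu = \xi_{\mu,\ell}(t)+\Phi_{\mu,\ell}(t)s$. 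Taylor expanding $\lambda_{l(\mu),0}(t,x)$ in $x$ around $x_{\mu,\ell}(t)$, one has
\[
\lambda_{l(\mu),0}(t,x) = x_{\mu,\ell}'(t) + \alpha(t)s + \tfrac{1}{2}\beta(t)s^2 + O(|s|^3),
\]
with the $\alpha,\beta$ of the statement. Multiplying this expansion by $\partial_x\phi_\mu$ and adding $\partial_t\phi_\mu$, the $s^0$ terms collapse to $\varphi_{\mu,\ell}'(t)$ (the two $\xi\, x_{\mu,\ell}'$ contributions cancel), the coefficient of $s^1$ becomes $\xi_{\mu,\ell}'(t)+\alpha(t)\xi_{\mu,\ell}(t)$ (the $\Phi\, x_{\mu,\ell}'$ terms cancel), and the coefficient of $s^2$ becomes $\tfrac{1}{2}\bigl(\Phi_{\mu,\ell}'(t)+2\alpha(t)\Phi_{\mu,\ell}(t)+\beta(t)\xi_{\mu,\ell}(t)\bigr)$.

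Next, I will verify that each of these three coefficients vanishes. The $s^1$ and $s^2$ coefficients are exactly the left-hand sides of the ODEs imposed on $\xi_{\mu,\ell}$ and $\Phi_{\mu,\ell}$, so they are zero by construction. For the $s^0$ coefficient, I will use the fact that $\varphi_{\mu,\ell}(t)=\psi_\mu(x^o_{\mu,\ell})$ is prescribed as constant in $t$, so $\varphi_{\mu,\ell}'\equiv 0$. Hence $V_\mu\phi_\mu=O(|s|^3)$ uniformly on $\mathcal{O}_{\mu,\ell}$.

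For the initial trace, I will use $x_{\mu,\ell}(0)=x^o_{\mu,\ell}$, so that on $\mathcal{O}^o_{\mu,\ell}$ one has $s=x-x^o_{\mu,\ell}$ and
\[
\phi_\mu(0,x) = \psi_\mu(x^o_{\mu,\ell}) + d\psi_\mu(x^o_{\mu,\ell})\,s + \tfrac{1}{2}d^2\psi_\mu(x^o_{\mu,\ell})\,s^2,
\]
by the prescribed initial values. Assumption \ref{Rzero} gives $\mathrm{Im}\,\psi_\mu(x^o_{\mu,\ell})=0$ and $d\,\mathrm{Im}\,\psi_\mu(x^o_{\mu,\ell})=0$ (since $x^o_{\mu,\ell}$ is a strict local minimum of $\mathrm{Im}\,\psi_\mu$), so the first two Taylor coefficients of $\psi_\mu$ are exactly $\psi_\mu(x^o_{\mu,\ell})$ and $d\psi_\mu(x^o_{\mu,\ell})$ as used above; the right-hand side is therefore the second-order Taylor polynomial of $\psi_\mu$ at $x^o_{\mu,\ell}$, and Taylor's theorem gives $\phi_\mu(0,x)-\psi_\mu(x)=O(|s|^3)$.

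The computation is essentially routine, and there is no serious obstacle; the one point that needs care is the systematic cancellation of the $x_{\mu,\ell}'(t)$ contributions in the $s^0$ and $s^1$ coefficients, which is what forces the transport equation $V_\mu\phi_\mu=O(|s|^3)$ to be \emph{linear} in $\Phi_{\mu,\ell}$ and thereby disentangles the real and imaginary parts of $\Phi_{\mu,\ell}$, consistently with the absence of diffraction in one spatial dimension noted after \eref{new:eikonal}.
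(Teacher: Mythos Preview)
Your proposal is correct and is precisely the ``direct'' computation that the paper alludes to: the paper does not actually spell out a proof here, it only remarks that the result ``can be proved either directly or by proposition 4.4 of Part I,'' and what you have written is the direct verification. One very minor remark: in the initial-trace part your appeal to Assumption~\ref{Rzero} is superfluous, since the identity $\phi_\mu(0,x)=\psi_\mu(x^o_{\mu,\ell})+d\psi_\mu(x^o_{\mu,\ell})\,s+\tfrac12 d^2\psi_\mu(x^o_{\mu,\ell})\,s^2$ already matches the second-order Taylor polynomial of $\psi_\mu$ at $x^o_{\mu,\ell}$ by the prescribed initial values alone, regardless of the reality of the first two coefficients.
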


However, we see that the expansion \eref{new:eikonal} does not determine the phases globally on $\ol{\Omega}$. Hence, we define an equivalence relation in $C^\infty(\ol{\Omega};\C^m)$, cf., also Part I,
\begin{equation}
\label{equivalence}
\begin{aligned}
&\text{$\phi,\phi' \in C^\infty(\ol{\Omega};\C^m)$ are equivalent iff the components $\phi_\mu$, $\phi'_\mu$ have the}\\ 
&\text{same Taylor polynomial of degree $k$ in the variable $s$ near $R_{\mu,\ell}$ for all $\ell$}.
\end{aligned}
\end{equation}
Then, \eref{new:eikonal} characterizes an unique equivalence class (with $k=2$) and any representative element $\phi$ satisfies $\mathrm{Im}\phi_\mu=0$ on $R_{\mu,\ell}$ for every $\ell$; we pick $\phi$ so that $\mathrm{Im}\phi_\mu = 0$ only on $\bigcup_\ell R_{\mu,\ell}$. The union $R_\mu=\bigcup_\ell R_{\mu,\ell}$ is a closed one-dimensional submanifold in $\ol{\Omega}$ called reference manifold for $\phi_\mu$; analogously, $R = \bigcup_\mu R_\mu$ is the reference manifold for $\phi$.

\subsection{Profiles}
\label{subsec:profiles}

The oscillatory profiles of complex geometric optics are superposition of harmonics $e^{i\langle g,\theta \rangle - \langle \gamma,r \rangle}$ with $(g,\gamma) \in \Sigma_\osc$ and
\begin{equation}
\label{new:spectrum-osc}
\Sigma_\osc = \{(g,\gamma) \in \dot{\Z}^m \times \dot{\N}^m;\ g=(g_\mu),\ \gamma=(\gamma_\mu),\  |g_\mu| \leq \gamma_\mu\},
\end{equation}
is their spectrum, $\dot{\Z}^m = \Z^m\setminus\{0\}$ and $\dot{\N}^m = \N^m \setminus\{0\}$. This very specific form of the spectrum is motivated in section \ref{extension}.

Let $\phi(t,x)$ be a representative for the equivalence class characterized by \eref{new:eikonal}, and let us define the following sets,
\begin{align}
\label{rset}
&\mathscr{R}^\phi = \{(t,x,g,\gamma) \in \ol{\Omega} \times \Sigma_\osc ;\ \mathrm{Im}\Psi(g,\gamma;\phi)=0 \},\\
\label{cset}
&\mathscr{C}^{\phi} = \{(t,x,g,\gamma) \in \mathscr{R}^\phi ;\ \det \sigma_{L_0}(t,x,d\Psi(g,\gamma;\phi)) = 0\},
\end{align}
where $\Psi(g,\gamma;\phi)=\langle g,\varphi \rangle + i\langle \gamma,\chi \rangle$, $\varphi = \mathrm{Re}\phi$ and $\chi = \mathrm{Im}\phi$. We have, in particular, $\mathrm{Im}\Psi(g,\gamma;\phi) = \langle \gamma,\chi \rangle = 0$; since $\gamma_\mu\in\N$ and $\chi_\mu$ vanishes only on the disjoint curves $R_{\mu,\ell}$, $\mathscr{R}^\phi$ is constant over each reference manifold $R_\mu$, namely,
\begin{equation*}
\mathscr{R}^\phi = \bigcup_{\mu =1}^m R_\mu \times \Sigma_\mu,
\end{equation*}
where
\begin{equation*}
\Sigma_\mu = \{(g,\gamma)\in \Sigma_\osc; \text{$\gamma=(\gamma_\nu)_\nu$ with $\gamma_\nu=0$ for $\nu \not=\mu$}\},
\end{equation*}
is independent on $\phi$; when $(g,\gamma) \in \Sigma_\mu$, the imaginary part of the complex phase $\Psi(g,\gamma;\phi)=g_\mu\varphi_\mu + i \gamma_\mu\chi_\mu$ vanishes on $R_\mu$ only. The crucial point is that, in one spatial dimension, also $\mathscr{C}^\phi$ is constant over each $R_\mu$ and this, roughly speaking, is the coherence property we need, cf., section \ref{intro}.

\begin{proposition}
\label{new:coherence1}
With assumption \ref{Rzero} and \ref{new:ref-man} and condition \ref{pol} satisfied, let $\phi \in C^\infty(\ol{\Omega};\C^m)$ be a complex phase satisfying \eref{new:eikonal} near $R$, then $\mathscr{C}^\phi = \mathscr{R}^\phi$.
\end{proposition}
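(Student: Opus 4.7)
The inclusion $\mathscr{C}^\phi\subseteq\mathscr{R}^\phi$ is immediate from \eref{cset}, so the work lies in the reverse direction. For any $(t,x,g,\gamma)\in\mathscr{R}^\phi$ my plan is to show that the point sits over a single reference manifold $R_\mu$, that the harmonic $(g,\gamma)$ is spectrally concentrated at the index $\mu$, and that at $(t,x)$ the differential $d\Psi$ is real-valued and automatically annihilated by the characteristic factor $f_{l(\mu)}$. First I would unpack the reality condition $\langle\gamma,\chi\rangle(t,x)=0$: since $\gamma_\nu\in\N$ and $\chi_\nu\geq 0$, every term vanishes individually, so $\gamma_\nu\chi_\nu(t,x)=0$ for each $\nu$. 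Assumption \ref{new:ref-man} makes the curves $R_{\mu,\ell}$ pairwise disjoint, so the reference manifolds $R_\nu$ are disjoint too, and at most one index $\mu$ satisfies $(t,x)\in R_\mu$; since $\gamma\neq 0$ exactly one does, forcing $\gamma_\nu=0$ for $\nu\neq\mu$. The oscillatory spectral cutoff $|g_\nu|\leq\gamma_\nu$ in \eref{new:spectrum-osc} then collapses $g$ to the same index, so $(g,\gamma)\in\Sigma_\mu$ with $g_\mu\neq 0$.

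Next I would compute
\begin{equation*}
d\Psi(g,\gamma;\phi)=g_\mu\, d\varphi_\mu+i\gamma_\mu\, d\chi_\mu
\end{equation*}
at $(t,x)\in R_\mu$. Since $\chi_\mu\geq 0$ by definition \ref{complex_phase} and vanishes on $R_\mu$, the reference manifold is a locus of global minima and $d\chi_\mu=0$ there; therefore $d\Psi=g_\mu\, d\varphi_\mu$ is real. Proposition \ref{new:riccati} gives $V_\mu\phi_\mu=O(|s|^3)$ in $\mathcal{O}_{\mu,\ell}$, which on $R_\mu$ reduces to $V_{l(\mu)}\phi_\mu=0$; taking real parts, $\partial_t\varphi_\mu+\lambda_{l(\mu),0}\partial_x\varphi_\mu=0$ at $(t,x)$. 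With $(\tau,\xi)=d\Psi$ this yields $\tau+\lambda_{l(\mu),0}\xi=g_\mu(\partial_t\varphi_\mu+\lambda_{l(\mu),0}\partial_x\varphi_\mu)=0$, so the sheet $f_{l(\mu)}=0$ contains $(t,x,d\Psi)$, $\det\sigma_{L_0}$ vanishes, and $(t,x,g,\gamma)\in\mathscr{C}^\phi$.

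I do not expect any serious obstacle beyond careful bookkeeping; the argument is a direct unpacking of the definitions. The conceptual crux is twofold: the tight spectral constraint $|g_\nu|\leq\gamma_\nu$ concentrates an oscillatory harmonic on a single reference manifold the moment its phase becomes real, and the non-negativity built into definition \ref{complex_phase} automatically kills $d\chi_\mu$ on $R_\mu$, rendering $d\Psi$ purely real. In one space dimension a real characteristic covector proportional to $d\varphi_\mu$ is tautologically on the appropriate sheet via the eikonal equation of proposition \ref{new:riccati}; this is precisely the mechanism by which the naive coherence condition is automatically verified, as anticipated in the introduction.
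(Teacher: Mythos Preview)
Your argument is correct and follows essentially the same route as the paper: both reduce to showing that for $(t,x,g,\gamma)\in\mathscr{R}^\phi$ one has $(g,\gamma)\in\Sigma_\mu$ with $(t,x)\in R_\mu$, and then that the eikonal relation $V_\mu\Psi=O(|s|^3)$ from proposition~\ref{new:riccati} kills the factor $f_{l(\mu)}$ in $\det\sigma_{L_0}(t,x,d\Psi)$. The only cosmetic difference is that the paper invokes the factorization $\det\sigma_{L_0}=\prod_l(V_l\Psi)^{m_l}$ directly for complex $\Psi$, whereas you insert the extra (correct but unnecessary) step of showing $d\chi_\mu=0$ on $R_\mu$ so as to work with a real covector.
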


\begin{proof}
By definition $\mathscr{C}^\phi \subseteq \mathscr{R}^\phi$. The converse can be proved on noting that, for $(t,x,g,\gamma)\in \mathscr{R}^\phi$, there is $\mu \in \{1,\ldots,m\}$ such that $(t,x) \in R_\mu$ and $\Psi(g,\gamma;\phi)=g_\mu\varphi_\mu + i \gamma_\mu\chi_\mu$ which solves the equation 
\begin{equation*}
V_\mu(t,x)\Psi(g,\gamma;\phi)=O(|s|^3),
\end{equation*}
near $R_\mu$. From the identity $\det \sigma_{L_0}(t,x,d\Psi) = \prod_l\big[V_l(t,x)\Psi(g,\gamma;\phi) \big]^{m_l}$, the product being over all the eigenvalues $\lambda_l$ with the corresponding multiplicity $m_l$, we have $\det \sigma_{L_0}(t,x,d\Psi)=0$ on $R_\mu$.
\end{proof}

\begin{proposition}
\label{new:coherence2}
Under the same hypothesis of proposition \ref{new:coherence1}, for every $(g,\gamma)\in\Sigma_\mu$ and $l \not =l(\mu)$ one has
\begin{equation*}
|V_l(t,x)\Psi(g,\gamma;\phi)| \geq C > 0,
\end{equation*}
near the one-dimensional submanifold $R_\mu$.
\end{proposition}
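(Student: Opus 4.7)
The plan is to evaluate $V_l\Psi$ explicitly on the reference curve $R_\mu$ and then propagate the resulting lower bound to a tubular neighbourhood by continuity. First, I would unpack $\Sigma_\mu$: for $(g,\gamma)\in\Sigma_\mu$ one has $\gamma_\nu=0$ for $\nu\neq\mu$, and since $(g,\gamma)\in\Sigma_\osc$ requires $|g_\nu|\leq\gamma_\nu$, it follows that $g_\nu=0$ for $\nu\neq\mu$ as well; combined with $g\neq 0$ this forces $|g_\mu|\geq 1$. Hence $\Psi(g,\gamma;\phi)=g_\mu\varphi_\mu+i\gamma_\mu\chi_\mu$.

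Next, I would decompose $V_l=V_\mu+(\lambda_{l,0}-\lambda_{l(\mu),0})\partial_x$, which gives
\[
V_l\Psi=V_\mu\Psi+\bigl(\lambda_{l,0}-\lambda_{l(\mu),0}\bigr)\partial_x\Psi.
\]
Since $V_\mu$ is a real vector field, taking real and imaginary parts of the identity $V_\mu\phi_\mu=O(|s|^3)$ from Proposition \ref{new:riccati} yields $V_\mu\varphi_\mu=V_\mu\chi_\mu=O(|s|^3)$, so $V_\mu\Psi$ vanishes on $R_\mu$. For the remaining term, the eikonal expansion \eref{new:eikonal} gives $\partial_x\phi_\mu=\xi_{\mu,\ell}(t)+\Phi_{\mu,\ell}(t)s$ modulo $O(|s|^2)$, so at $s=0$ one reads off $\partial_x\varphi_\mu|_{R_\mu}=\xi_{\mu,\ell}(t)\in\R$ and $\partial_x\chi_\mu|_{R_\mu}=0$ (the latter also reflecting that $\chi_\mu$ attains a minimum along $R_\mu$). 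Therefore
\[
V_l\Psi\big|_{R_\mu}=g_\mu\,\xi_{\mu,\ell}(t)\bigl(\lambda_{l,0}(t,x)-\lambda_{l(\mu),0}(t,x)\bigr).
\]

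Finally, I would bound the three factors separately: $|g_\mu|\geq 1$; strict hyperbolicity (Assumption \ref{new:hyperb}) provides a uniform spectral gap $|\lambda_{l,0}-\lambda_{l(\mu),0}|\geq c_1>0$ on $\ol{\Omega}$ whenever $l\neq l(\mu)$; and $\xi_{\mu,\ell}$ solves the homogeneous linear ODE $\xi'_{\mu,\ell}+\alpha\,\xi_{\mu,\ell}=0$ with initial value $d\psi_\mu(x^o_{\mu,\ell})\neq 0$ — nonzero because $x^o_{\mu,\ell}$ is a critical point of $\mathrm{Im}\,\psi_\mu$, so $d\psi_\mu(x^o_{\mu,\ell})=d\varphi_\mu(x^o_{\mu,\ell})$ which is nonzero by Definition \ref{complex_phase} — whence $|\xi_{\mu,\ell}(t)|\geq c_0>0$ on the compact interval $[0,T]$. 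Combining these estimates, $|V_l\Psi|\geq c_0 c_1>0$ pointwise on $R_\mu$, and smoothness lets one extend the estimate (with, say, half the constant) to a neighbourhood of $R_\mu$ in $\ol{\Omega}$. The only subtlety — not really an obstacle — is that the pointwise values $V_\mu\phi_\mu|_{R_\mu}$ and $\partial_x\phi_\mu|_{R_\mu}$ must be invariants of the equivalence class \eref{equivalence}; this is automatic because equivalent representatives share the degree-$2$ Taylor polynomial in $s$ near $R$, so the computation above is insensitive to the particular representative of $\phi$ chosen.
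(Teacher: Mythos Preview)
Your proof is correct and follows essentially the same route as the paper's: both arguments reduce to the identity $V_l\Psi|_{R_\mu}=g_\mu(\lambda_{l,0}-\lambda_{l(\mu),0})\partial_x\varphi_\mu$ and invoke strict hyperbolicity together with the non-vanishing of $\partial_x\varphi_\mu$ along $R_\mu$. The paper phrases this as a two-line contradiction, whereas you carry out the computation directly and supply the details the paper leaves implicit (the structure of $\Sigma_\mu$, the explicit solution of the ODE for $\xi_{\mu,\ell}$ showing $|\partial_x\varphi_\mu|\geq c_0$, and the invariance under the equivalence relation \eref{equivalence}); this makes your argument more self-contained but not substantively different.
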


\begin{proof}
If there would be an $l$ such that, for every $C>0$, $|V_l\Psi(g,\gamma;\phi)| \leq C$ in a point on $R_\mu$, then, in that point, $|\lambda_{l,0}-\lambda_{l(\mu),0}| |\partial_x\varphi_\mu| \leq C$ against the hypothesis of strict hyperbolicity for which $\lambda_l$ should be distinct with a uniform bound on the distance.
\end{proof}

Let us now define the cut-off functions $\omega_{\mu,\ell}(t,x)$ as in section 6 of Part I: we pick $\omega_{\mu,\ell}^o \in C^\infty_0(\mathcal{O}_{\mu,\ell}^o)$, $\mathcal{O}_{\mu,\ell}^o = \mathcal{O}_{\mu,\ell} \cap X^o$, such that $\omega^o_{\mu,\ell} \equiv 1$ in a neighbourhood of the point $x^o_{\mu,\ell}$ and we make use of the coordinates $(t,s)$ in order to extend $\omega^o_{\mu,\ell}$ to the relatively open $\mathcal{O}_{\mu,\ell} \subseteq \ol{\Omega}$, namely, we set $\omega_{\mu,\ell}(t,s)=\omega^o_{\mu,\ell}(s)$.

By using the series representation for oscillatory profiles, cf., section \ref{extension},
\begin{equation*}
\mathcal{U}(t,x,z) = \sum_{(g,\gamma)\in \Sigma_\osc}\widehat{U}(t,x,g,\gamma)e^{i\Psi(g,\gamma;z)},
\end{equation*}
and the functions $\omega_{\mu,\ell}(t,x)$, we define the operator
\begin{equation}
\label{new:E}
\mathds{E} \mathcal{U}(t,x,z) = \sum_{(g,\gamma) \in \Sigma_\osc} \pi(t,x,g,\gamma) \hat{U}(t,x,g,\gamma)e^{i\Psi(g,\gamma;z)}, 
\end{equation}
where $\pi(t,x,g,\gamma) = \sum_\ell \omega_{\mu,\ell}(t,x)\pi_{l(\mu)}(t,x)$ when $(g,\gamma)\in\Sigma_\mu$, otherwise we set arbitrarily $\pi(t,x,g,\gamma)=I$. Analogously, we set
\begin{equation}
\label{new:Q}
Q \mathcal{U}(t,x,z) = -i\sum_{(g,\gamma) \in \Sigma_\osc} \mathscr{Q}^\phi(t,x,g,\gamma) \hat{U}(t,x,g,\gamma)e^{i\Psi(g,\gamma;z)}, 
\end{equation}
where $\mathscr{Q}^\phi(t,x,g,\gamma)$ is a smooth extension (e.g., obtained by using $\omega_{\mu,\ell}$) to a compact neighbourhood of $R_\mu$ of 
\begin{equation*}
\sum_{l\not=l(\mu)} (V_{l}(t,x)\Psi(g,\gamma;\phi) \rangle)^{-1}\pi_{l'}(t,x),\quad (t,x)\in R_\mu,
\end{equation*}
when $(g,\gamma)\in\Sigma_\mu$ and we set arbitrarily $\mathscr{Q}^\phi(t,x,g,\gamma)=0$ otherwise. The operator $Q$ is well defined in virtue of proposition \ref{new:coherence2}. Let us define the equivalence relation
\begin{equation}
\label{FM-equivalence}
\begin{aligned}
&\text{two Fourier multipliers $A_1,A_2:PC_\osc^\infty \to PC_\osc^\infty$ are equivalent iff}\\
&\text{for every $\mu$ their coefficients with $(g,\gamma) \in \Sigma_\mu$ have the same Taylor}\\ 
&\text{polynomial of degree $k\geq 0$ in the variable $s$ near $R_\mu$}.
\end{aligned}
\end{equation}
Then, we can take any other pair of Fourier multipliers that are equivalent to \eref{new:E} and \eref{new:Q} with $k=2$ and $k=0$, respectively, cf., also propositions \ref{smooth-EQ} and \ref{modulo}.

The profiles in \eref{new:ansatz} are given iteratively by
\begin{equation*}
\mathcal{U}^{(0)}(t,x,z) = \mathds{E} \underline{\mathcal{U}} (t,x,z), \qquad \mathcal{U}^{(1)}(t,x,z) = - Q N(\mathcal{U}^{(0)})(t,x,z),
\end{equation*}
where $\underline{\mathcal{U}} \in PC_\osc^\infty(\ol{\Omega};\C^N)$ is a smooth extension to a compact neighbourhood of $R$ in $\ol{\Omega}$ of the solution $\mathcal{U} \in PC^\infty_\osc(R;\C^N)$ of the following Cauchy problem. First, let us write $z=\theta +ir$ and define
\begin{align*}
&B(t,x,\mathcal{U})\partial_\theta \mathcal{U} = \sum_{\mu=1}^m  \big[\partial_x \varphi_\mu \partial_u A \mathcal{U} + \partial_x \varphi_\mu \partial_{\ol{u}} A \ol{\mathcal{U}} \big]\partial_{\theta_\mu} \mathcal{U},\\
&C(t,x,\mathcal{U})\mathcal{U} = (\partial_u A \mathcal{U} + \partial_{\ol{u}} A \ol{\mathcal{U}}) \partial_x u_0 + (\partial_{u}F \mathcal{U} + \partial_{\ol{u}} F \ol{\mathcal{U}} ),
\end{align*}
in the quasilinear case and $B=0$, $C(t,x,\mathcal{U})\mathcal{U} = F(t,x,\mathcal{U})$ in the semilinear case; the latter definition of $C$ is consistent since $F(t,x,0)=0$ in view of assumption \ref{new:zero}. Finally, let $\mathds{E}_0$ be the restriction of $\mathds{E}$ to the reference manifold $R$. Then, $\mathcal{U} \in PC^\infty_\osc(R;\C^N)$ is determined by the transport equation on $R$
\begin{equation}
\label{te}
\left\{
\begin{aligned}
&(I-\mathds{E}_0) \mathcal{U} = 0,\\
&\mathds{E}_0\big[L_0 + B(t,x,\mathcal{U})\partial_\theta + C(t,x,\mathcal{U})\big] \mathcal{U} =0,\qquad (t,x) \in R,\\
&\mathcal{U}_{|t=0}(x,z) = \mathcal{H}(x,z), \qquad x \in R^o=R \cap X^o,
\end{aligned}\right.
\end{equation}
where the initial datum is defined by the condition
\begin{equation*}
h^\ep (x) = \mathcal{H}\big(x,\psi(x)/\ep\big).
\end{equation*}
The profile $\mathcal{H} \in PC_\osc^\infty(\ol{X^o};\C^N)$ is obtained by defining its Fourier coefficients $\widehat{H}(t,x,g,\gamma) = h_\mu(x)$, for $g = \gamma =(0,\ldots,0,1,0,\ldots 0)$, with the unit in the $\mu$-th entry, and $\widehat{H} = 0$ otherwise. Condition \ref{pol} implies $(I-\mathds{E}_0) \mathcal{H}_{|R^o} = 0$, thus, $\mathcal{H}$ is admissible as initial condition for equation \eref{te}. The well-posedness of \eref{te} is addressed in section \ref{existence}.

\subsection{Main result}
\label{subsec:main} 

With the equivalence classes of complex phases and profiles, the ansatz \eref{new:ansatz} yields an equivalence class of approximate solutions. We note that, from one hand, the equivalence class of phases is readily determined by solving a set of ordinary differential equations; on the other hand, the existence of profiles is subordinated to the well-posedness of the Cauchy problem \eref{te} which is proved in section \ref{existence}.

\begin{proposition}
\label{main}
Let assumptions \ref{new:zero}-\ref{new:ref-man} be satisfied and let us write the initial datum so that condition \ref{pol} holds true. Then, there exists an equivalence class of functions $v^\ep \in C^\infty(\ol{\Omega};\C^N)$ such that, for $\ep \in (0,\ep_0]$, $0<\ep_0<1$,
\begin{itemize}
\item[a)] $|u_{|t=0}^\ep - v^\ep_{|t=0}| \leq C_1 \ep^{p+\frac{1}{2}}$ uniformly in $\ol{X^o}$;
\item[a)] $|L(t,x, v^\ep,\partial v^\ep) | \leq C_2 \ep^{p+\frac{1}{2}}$ uniformly in $\ol{\Omega}$.
\end{itemize}
Here, $p=1$ in the quasilinear case and $p=0$ in the semilinear case.
\end{proposition}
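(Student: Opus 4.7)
The plan is to substitute the ansatz \eref{new:ansatz} into \eref{new:L}, use the formal expansion \eref{new:formal-expansion}, and show that every surviving term is controlled uniformly by $\ep^{p+1/2}$. The central mechanism is Gaussian concentration: each harmonic $e^{i\Psi(g,\gamma;\phi)/\ep}$ with $(g,\gamma)\in\Sigma_\mu$ carries the damping factor $e^{-\gamma_\mu\chi_\mu/\ep}$, and since $\chi_\mu\simeq\tfrac12\mathrm{Im}\Phi_{\mu,\ell}(t)\,s^2$ with $\mathrm{Im}\Phi_{\mu,\ell}>0$, the elementary inequality $|s|^k e^{-cs^2/\ep}\leq C_k\ep^{k/2}$ converts Taylor vanishing of order $k$ on $R_\mu$ into a gain of $\ep^{k/2}$ in the sup norm. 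Harmonics whose $\gamma$ is supported on several indices concentrate at the at most isolated crossings of the reference manifolds by a two-dimensional Gaussian and yield even smaller contributions.

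First I would bound the leading $\ep^{p-1}P\mathcal{U}^{(0)}(t,x,\phi/\ep)$. Since $\mathcal{U}^{(0)}=\mathds{E}\underline{\mathcal{U}}$ polarizes every $\Sigma_\mu$-harmonic through $\pi_{l(\mu)}$ and $A_0\pi_{l(\mu)}=\lambda_{l(\mu),0}\pi_{l(\mu)}$, the corresponding coefficient of $P\mathcal{U}^{(0)}$ factors as $V_{l(\mu)}\phi_\mu$ times a smooth symbol; by Proposition \ref{new:riccati}, $V_{l(\mu)}\phi_\mu=O(|s|^3)$ near $R_\mu$, and the concentration estimate with $k=3$ yields $O(\ep^{3/2})$, so the whole contribution is $O(\ep^{p+1/2})$. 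Next I would handle $\ep^p[P\mathcal{U}^{(1)}+N(\mathcal{U}^{(0)})]$ by splitting $N(\mathcal{U}^{(0)})=\mathds{E} N(\mathcal{U}^{(0)})+(I-\mathds{E})N(\mathcal{U}^{(0)})$. For the non-polarized piece, Proposition \ref{new:coherence2} ensures that $Q$ given in \eref{new:Q} is well defined and that $PQ=I-\mathds{E}$ modulo Fourier multipliers equivalent to zero of order $k=0$ in the sense of \eref{FM-equivalence}; choosing $\mathcal{U}^{(1)}=-QN(\mathcal{U}^{(0)})$ leaves a residual that vanishes on $R$, hence is $O(|s|)=O(\ep^{1/2})$. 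For the polarized piece one uses $\partial_x\phi_\mu|_{R_\mu}=\partial_x\varphi_\mu|_{R_\mu}$ to identify $\mathds{E}_0\mathscr{B}_0(\mathcal{U})|_R$ with $\mathds{E}_0[L_0+B\partial_\theta+C]\mathcal{U}|_R$, which vanishes on $R$ by the transport equation \eref{te}; smoothness of $\underline{\mathcal{U}}$ and $\chi_\mu=O(s^2)$ give once again $O(|s|)=O(\ep^{1/2})$ off $R$. Assumption \ref{rect} keeps the bookkeeping inside $PC^\infty_\osc$, and the smooth remainder $O(\ep^{p+1})$ arising from the Taylor expansion of $A$ and $F$ around $u_0$ (where $v^\ep-u_0=O(\ep^p)$) is absorbed uniformly, proving (b). The semilinear case $p=0$, $u_0=0$, $N=L$ is completely analogous thanks to assumption \ref{new:zero}.

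For (a), at $t=0$ the Cauchy datum in \eref{te} together with condition \ref{pol} (which gives $\mathds{E}_0\mathcal{H}=\mathcal{H}$) yields $\mathcal{U}^{(0)}(0,\cdot,\cdot)=\mathcal{H}$ on $R^o$; the remaining discrepancy is the phase mismatch $\phi(0,x)-\psi(x)=O(|s|^3)$ of Proposition \ref{new:riccati}, for which $|e^{i(\phi-\psi)/\ep}-1|\leq|\phi-\psi|/\ep$ combined with Gaussian concentration produces $O(\ep^{1/2})$ per profile, so $|v^\ep_{|t=0}-u^\ep_{|t=0}|=O(\ep^{p+1/2})$. Independence from the chosen representatives of the equivalence classes \eref{equivalence} and \eref{FM-equivalence} follows because different representatives differ by quantities vanishing to order $k$ on $R$ and contribute only an extra $O(\ep^{k/2})$. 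The main technical obstacle is the uniform control of the Gaussian trade across the full spectrum of $\mathcal{U}^{(0)}$ and $\mathcal{U}^{(1)}$: one must justify termwise summation and differentiation of the Fourier series and control the weighted sup-bounds of every Taylor coefficient of $V_{l(\mu)}\phi_\mu$ and $\mathscr{Q}^\phi$; this rests on the absolute-convergence norms built into the space $PC^\infty_\osc$ defined in section \ref{extension}, together with the well-posedness of \eref{te} established in section \ref{existence}.
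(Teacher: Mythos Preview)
Your proposal is correct and follows essentially the same approach as the paper: the Gaussian concentration mechanism you describe is precisely the content of Lemmas~\ref{maslov-estimate1}--\ref{maslov-estimate2-small}, and your handling of $P\mathcal{U}^{(0)}$, the splitting $N=\mathds{E}N+(I-\mathds{E})N$ with $\mathcal{U}^{(1)}=-QN(\mathcal{U}^{(0)})$, and the phase mismatch at $t=0$ are exactly what the paper packages into Propositions~\ref{PQE}, \ref{smooth-EQ}, \ref{furbata2}, \ref{te1}, \ref{profiles-construction} and Corollary~\ref{solution-profile}, which the paper's short proof then simply cites. One small imprecision: by Assumption~\ref{new:ref-man} the curves $R_{\mu,\ell}$ are pairwise disjoint, so harmonics with $\gamma$ supported on several indices have $\mathrm{Im}\Psi>0$ everywhere on $\ol{\Omega}$ and are $O(\ep^\infty)$ by Lemma~\ref{maslov-estimate1} rather than concentrating at isolated crossings.
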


As for the existence of the exact solutions $u^\ep$, the pointwise argument used in Part I for the linear theory fails for nonlinear equations as the lifespan of each solution depends on $\ep$. In order to obtain a family of exact solutions $\{u^\ep\}_\ep$ bounded in $L^\infty$, we need to control higher order derivatives so that we can apply the Sobolev's embedding theorem; on the other hand, applying $\partial_s^k$ with $k$ large enough leads to terms of order $\ep^{-h}$, $h >0$, that are unbounded when $\ep \in (0,\ep_0]$. Therefore, it is natural to work with conormal distributions \cite[and references therein]{MeRi,Me} as in the approach of Alterman and Rauch to short pulses \cite{AR1,AR2}. The development of this ideas is left for future work. 

\section{Complex Geometric Optics Profiles}
\label{extension}

In this section, $\ol{\Omega}$ denotes a generic smooth complex manifold with boundary $\partial \Omega$; we shall define complex geometric optics profiles over $\ol{\Omega}$, that is, functions $\mathcal{U}(y,z)$ of $y \in \ol{\Omega}$ and $z = (z_\mu) \in \C^m$ with $\mathrm{Im} z_\mu \geq 0$ and with the following properties:
\begin{itemize}
\item[(i)] smoothness with respect to all the variables;
\item[(ii)] periodicity in $\mathrm{Re} z$, that is, $\mathcal{U}(t,x, z) =\mathcal{U}(t,x,z + 2\pi g)$, for any $g\in \Z^m$; 
\item[(iii)] closure of their space with respect to nonlinear partial differential operators with polynomial nonlinearity and coefficients in $C^\infty(\ol{\Omega})$.
\end{itemize}

Let us recall that, in standard geometric optics, periodic profiles are functions that belong to $C^\infty(\ol{\Omega} \times \T^m; \C^N)$. We shall replace the torus $\T$ with $\T_c = \ol{\C}_+/2\pi\Z$, where the action of the group $2\pi \Z$ on the ``upper-half'' complex plane $\ol{\C}_+ = \{ w \in\C ; \mathrm{Im} w \geq 0 \}$ is $w \mapsto w + 2\pi n$, $w \in \ol{\C}_+$ and $n \in \Z$. We see that $\T_c^m \cong  \T^m \times \ol{\R}^m_+$ since any $z \in \T_c^m$ can be written in the form $z = \theta + i r$ with $\theta \in \T^m$ and $r \in \ol{\R}^m_+$; the torus $\T^m$ is identified with $\T^m\times\{r=0\}$. Now, a function $\mathcal{U}$ of class $C^\infty(\ol{\Omega} \times \T_c^m; \C^N)$ satisfies conditions (i) and (ii) automatically.

\begin{definition}
\label{complex-profile-defn}
The spectrum of a generic complex geometric optics profile is 
\begin{equation}
\label{new:spectrum}
\Sigma =\{(g,\gamma)\in\Z^m \times \N^m;\ g=(g_\mu),\ \gamma=(\gamma_\mu),\ |g_\mu|\leq \gamma_\mu\}.
\end{equation}
Then, the algebra $PC^\infty(\ol{\Omega})$ is the subspace of $C^\infty(\ol{\Omega} \times \T_c^m)$ of series
\begin{equation}
\label{complex-profile-series}
\mathcal{U}(y,z) = \sum \widehat{U}(y,g,\gamma) e^{i \Psi(g,\gamma;z)},\qquad (g,\gamma) \in \Sigma,
\end{equation}
where $\widehat{U}(\cdot,g,\gamma) \in C^\infty(\ol{\Omega})$ and $|X_1\cdots X_M \widehat{U}(y,g,\gamma)| \leq C_{\alpha} |(g,\gamma)|^{-k}$ for every set of smooth tangent fields $X_1,\ldots,X_M$ in $\ol{\Omega}$ and for every $k \in \N$.
\end{definition}

\begin{remark}
\label{new:tensor-algebra}
If not specified the target space of profiles $\mathcal{U}$ is the tensor algebra of $\C^N$ and $PC^\infty(\ol{\Omega})$ is an algebra with respect to the pointwise tensor product.
\end{remark}

\begin{remark}
\label{new:motivation}
One may define profiles with the spectrum being the whole $\Z^m\times\N^m$. The particular choice of the spectrum $\Sigma$ is motivated as follows. In the linear complex geometric optics one considers harmonics $u_{\pm}(z)=a e^{\pm i\theta - r}$; on starting from these functions, a polynomial nonlinearity generates a profile with the spectrum contained in $\Sigma$. The choice of such a minimal spectrum greatly simplifies the analysis of the characteristic set $\mathscr{C}^\phi$ and, thus, of the coherence for complex phases.
\end{remark}

For every function $\mathcal{U} \in PC^\infty$, $X_1\cdots X_M \partial_{z,\bar{z}}^\beta \mathcal{U}$ is bounded on $\ol{\Omega} \times \T_c^m$ and $PC^\infty$ is a Fr\'echet space with seminorms 
\begin{equation*}
\|\mathcal{U}\|_k =  \sum_{|\alpha| + |\beta| \leq k} \sup | (X_1,\ldots,X_n)^\alpha \partial_{z,\bar{z}}^\beta \mathcal{U}(y,z)|
\end{equation*}
with $X_1,X_2,\ldots,X_n$ being the generators of the Lie algebra of smooth vector fields on $\ol{\Omega}$: this is the $C^\infty$-topology on $\ol{\Omega}\times \T_c^m$. Particularly, that $PC^\infty$ is closed can be proved by using the following expression for the coefficients $\widehat{U}$. As a function of $(\theta = \mathrm{Re} z, r =\mathrm{Im} z)$, the profile $\mathcal{U}(y,z)=\mathcal{U}(y,\theta,r)$ can be extended to a function of $(y,\theta, w)$ with $w \in \C^m$ $r_\mu=\mathrm{Re}(w_\mu)  \geq 0$; then, we have
\begin{equation*}
\widehat{U}(y,g,\gamma) = \Big(\frac{-i}{4\pi^2}\Big)^m \int_{i\mathds{T}^m} \Big(\int_{\T^m} \mathcal{U}(y,\theta,w) e^{- i\langle g,\theta \rangle} d\theta\Big)e^{-\langle \gamma,w \rangle}dw,
\end{equation*}
where $i\mathds{T}^m = \{ w \in \C^m ; w = i \theta',\ \theta' \in \T^m\}$. The foregoing representation of the coefficients is continuous $:PC^\infty(\ol{\Omega}) \to C^\infty (\ol{\Omega})$ with both spaces equipped by the $C^\infty$-topology; a Cauchy sequence in $PC^\infty$ corresponds to a family of Cauchy sequences for the coefficients that are, therefore, convergent in $C^\infty(\ol{\Omega})$; then their limit satisfies the estimate of definition \ref{complex-profile-defn} and it defines a function which is the limit of the Cauchy sequence in $PC^\infty$ we started from. 

Through straightforward calculation we get
\begin{subequations}
\label{prof_der}
\begin{align}
\label{zeta_der}
&\partial_{z_\mu} \mathcal{U}(y,z) = \sum_{(g,\gamma)} \frac{i}{2}(g_\mu + \gamma_\mu) \widehat{U}(y,g,\gamma) e^{i\Psi(g,\gamma;z)},\\
\label{zetabar_der}
&\partial_{\ol{z}_\mu} \mathcal{U}(y,z) = \sum_{(g,\gamma)} \frac{i}{2}(g_\mu - \gamma_\mu) \widehat{U}(y,g,\gamma) e^{i\Psi(g,\gamma;z)},
\end{align}
\end{subequations}
hence, $PC^\infty$ is closed for constant coefficients partial differential operators. In particular, $\mathcal{U}(y,z)$ is holomorphic in $z$ iff $\widehat{U}(y,g,\gamma)=0$ when $g \not= \gamma$. In addition, $PC^\infty$ yields an algebra with respect to pointwise tensor multiplication as envisaged in remark \ref{new:tensor-algebra}. In order to see this, one can write the product of any $\mathcal{U}_1, \mathcal{U}_2 \in PC^\infty$,
\begin{align*}
\mathcal{U}_1(y,z) \mathcal{U}_2(y,z) &= \sum_{(g_1,\gamma_1)} \sum_{(g_2,\gamma_2)} \widehat{U}_1(y,g_1,\gamma_1)\widehat{U}_2(y,g_2,\gamma_2) e^{i\Psi(g_1+g_2,\gamma_1+\gamma_2;z)}\\
&=\sum_{(g,\gamma)} \Big(\sum_{g',\gamma' \leq \gamma} \widehat{U}_1(t,x,g-g',\gamma-\gamma')\widehat{U}_2(t,x,g',\gamma')\Big) e^{i\Psi(g,\gamma;z)},
\end{align*}
where $g = g_1 + g_2$, $\gamma = \gamma_1 + \gamma_2$, $g'=g_2$ and $\gamma'=\gamma_2$ and one can see that the series between brackets converges to $\widehat{U}_3(y,g,\gamma)$ in $C^\infty$; we see that $(g,\gamma)\in \Sigma$ since $|g_\mu| \leq |g_{1,\mu}| + |g_{2,\mu}| \leq \gamma_{1,\mu} + \gamma_{2,\mu}=\gamma_\mu$ and $\widehat{U}_3(y,g,\gamma)$ satisfies the requirements of definition \ref{complex-profile-defn}. 

\begin{proposition}
\label{subspace-profile}
The algebra $PC^\infty(\ol{\Omega})$ constitutes a $C^\infty(\ol{\Omega})$-module closed with respect to nonlinear differential operators with polynomial nonlinearity and coefficients in $C^\infty(\ol{\Omega})$.
\end{proposition}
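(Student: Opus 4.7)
The plan is to verify three closure properties in turn and then combine them. Each reduces to a bookkeeping argument on the series representation \eref{complex-profile-series}, and the convergence of the coefficient bounds has essentially already been carried out in the text immediately preceding the proposition.

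First, the $C^\infty(\ol{\Omega})$-module property. Given $f \in C^\infty(\ol{\Omega})$ and $\mathcal{U} \in PC^\infty(\ol{\Omega})$ with Fourier coefficients $\widehat{U}(y,g,\gamma)$, the product $f\mathcal{U}$ has formal series coefficients $f(y)\widehat{U}(y,g,\gamma)$. Leibniz applied to $X_1 \cdots X_M (f \widehat{U})$ produces a finite sum of products of derivatives of $f$ (all bounded on $\ol{\Omega}$) with derivatives of $\widehat{U}$; each such term inherits the rapid-decay estimate of definition \ref{complex-profile-defn}, so $f\mathcal{U} \in PC^\infty(\ol{\Omega})$.

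Second, closure under differentiation. For the fibre derivatives, formulas \eref{zeta_der}--\eref{zetabar_der} show that $\partial_{z_\mu}$ and $\partial_{\ol{z}_\mu}$ merely multiply the Fourier coefficients by factors linear in $(g,\gamma)$, which are absorbed by the rapid decay. For a smooth tangent field $X$ on $\ol{\Omega}$, termwise differentiation gives a series with coefficients $X\widehat{U}(y,g,\gamma) \in C^\infty(\ol{\Omega})$ that again satisfy the bounds of definition \ref{complex-profile-defn}; convergence of the differentiated series in the $C^\infty$-topology on $\ol{\Omega}\times\T_c^m$ justifies interchanging $X$ with the sum. Combining, any constant-order partial differential operator preserves $PC^\infty$.

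Third, closure under polynomial nonlinearities. The pointwise tensor product of two elements of $PC^\infty$ lies in $PC^\infty$, as was verified just before the statement by writing the Cauchy-type convolution of coefficients and checking $(g,\gamma) \in \Sigma$ via the inequality $|g_\mu| \leq |g_{1,\mu}|+|g_{2,\mu}| \leq \gamma_{1,\mu}+\gamma_{2,\mu}$; iteration gives closure under arbitrary monomials in $(\mathcal{U},\ol{\mathcal{U}})$. Multiplying by $C^\infty(\ol{\Omega})$ coefficients via the module property and taking finite sums extends this to polynomial nonlinearities with smooth coefficients.

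Finally, a nonlinear differential operator with polynomial nonlinearity and $C^\infty$ coefficients is a finite composition of (a) differentiation in $(y,z,\ol{z})$, (b) multiplication by elements of $C^\infty(\ol{\Omega})$, and (c) tensor multiplication of profiles. Each of these maps $PC^\infty(\ol{\Omega})$ into itself by the three points above, proving the claim. The only potentially delicate step is verifying that the convolution series defining the coefficient of a product actually converges in $C^\infty(\ol{\Omega})$ and still satisfies the rapid decay; this was the content of the calculation preceding the proposition, and no additional difficulty arises when it is invoked here.
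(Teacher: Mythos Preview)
Your proof is correct and follows essentially the same approach as the paper: both reduce the claim to the three building blocks (module structure, closure under differentiation, algebra structure) and invoke the product and derivative computations carried out in the text immediately preceding the proposition. You are slightly more explicit than the paper in spelling out the Leibniz argument for the module property and in treating tangent-field differentiation on $\ol{\Omega}$ separately, but the logical content is the same.
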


\begin{proof}
We have already shown that $PC^\infty$ is an algebra closed with respect to constant coefficient partial differential operators. For any $f \in C^\infty(\ol{\Omega})$ and $\mathcal{U} \in PC^\infty$, the product $f(y)\mathcal{U}(y,z)$ is again of the form \eref{complex-profile-series} with coefficients $f(y)\widehat{U}(y,g,\gamma)$ satisfying the requirements of definition \ref{complex-profile-defn}, hence, $f\mathcal{U}\in PC^\infty$. This means that $PC^\infty$ is a $C^\infty(\ol{\Omega})$-module, thus, it is closed with respect to any partial differential operators with coefficients in $C^\infty(\ol{\Omega})$ and, being an algebra, we can also account for polynomial nonlinearities.
\end{proof}

Let us now introduce the subspace of oscillatory profiles which is used in the formulation of the ansatz \eref{new:ansatz}.

\begin{definition}
\label{new:oscillatory}
The space $PC_\osc^\infty(\ol{\Omega})$ is subspace of profiles $\mathcal{U} \in PC^\infty(\ol{\Omega})$ with coefficients $\widehat{U}(t,x,g,\gamma)=0$ when $g=0$.
\end{definition}

The subspace $PC^\infty_\osc$ is closed in the $C^\infty$-topology discussed above, but it is not closed for the pointwise tensor multiplication as rectification occurs \cite{Rect}. According to definition \ref{new:oscillatory} the spectrum of $\mathcal{U} \in PC^\infty_\osc$ is contained in $\Sigma_\osc$ defined in \eref{new:spectrum-osc}.

\section{Solution in the Space of Formal Oscillatory Series}
\label{local-coherence}

According to the formal expansion of section \ref{sec:formal-expansion}, the unknown functions should be determined so that
\begin{gather}
\label{eq1-nonlinear}
P(t,x,\partial_z,\partial_{\ol{z}}) \mathcal{U}^{(0)}(t,x,\phi/\ep ) = O(\ep^{\frac{3}{2}}),\\
\label{eq2-nonlinear}
P(t,x,\partial_z,\partial_{\ol{z}}) \mathcal{U}^{(1)}(t,x,\phi/\ep ) + N(\mathcal{U}^{(0)})(t,x,\phi/\ep) = O(\ep^{\frac{1}{2}}),
\end{gather}
with the operators defined in section \ref{sec:formal-expansion}. 

If equations \eref{eq1-nonlinear} and \eref{eq2-nonlinear} are uniformly satisfied in $\ol{\Omega}$ and the remainders in equation \eref{new:formal-expansion} are continuous in the compact domain $\ol{\Omega}$ and bounded for $\ep \in (0,\ep_0]$, we can conclude that $L(t,x,v^\ep,\partial v^\ep) = O(\ep^{\frac{3}{2}})$ in the quasilinear case $(p=1)$, with the fully nonlinear estimate $L(t,x,v^\ep,\partial v^\ep) = O(\ep^{\frac{1}{2}})$ for the semilinear case ($p=0$).

First, we consider the operator $P(t,x,\partial_{z},\partial_{\ol{z}})$ acting on formal oscillatory series,
\begin{equation*}
\mathcal{U}(t,x,z) = \sum_{(g,\gamma)\in \Sigma_\osc} \widehat{U}(t,x,g,\gamma) e^{i \Psi( g , \gamma ; z)}.
\end{equation*}
By using equations \eref{prof_der} we readily get
\begin{equation*}
P\mathcal{U}(t,x,z) = \sum_{(g,\gamma)\in\Sigma_\osc} \sigma_{L_0}\big(t,x,d\Psi(g,\gamma;\phi)\big) \widehat{U}(t,x,g,\gamma) e^{i\Psi(g,\gamma;z)}.
\end{equation*}
where $\sigma_{L_0}$ is the principal symbol of the linearized operator. Hence, $P$ amounts to a Fourier multiplier with coefficients $\sigma_{L_0}\big(t,x,d\Psi(g,\gamma;\phi)\big)$.
Equation \eref{eq1-nonlinear} is \emph{formally} satisfied if each term of the series evaluated at $z=\phi/\ep$ is $O(\ep^{\frac{3}{2}})$ uniformly in $\ol{\Omega}$. 

Here is where the coherence of complex phases comes into play: we have to fulfill an infinite set of complex eikonal equations by using combinations $\Psi(g,\gamma;\phi)$ of a finite number of complex phases. In several spatial dimensions this is a very strong condition, but in a single space dimension and with the spectrum of profiles defined in section \ref{extension}, the complex phases determined in section \ref{subsec:phases} are always coherent as shown in proposition \ref{new:coherence1}. 

First, let us consider two formal Fourier multipliers equivalent to \eref{new:E} and \eref{new:Q} modulo the equivalence relation \eref{FM-equivalence} with $k=2$ and $k=0$, respectively; we still denote such operators $\mathds{E}$ and $Q$. We want to prove that $\mathds{E}$ amounts to the projector onto the approximate kernel of $P$ and $Q$ amounts to approximate partial inverse of $P$ in the space of formal series. 

As in the linear theory, the key argument is based on the following estimates proved in Part I.

\begin{lemma}
\label{maslov-estimate1}
Let $k\geq 0$ be an integer, $\phi =\varphi + i \chi \in C^\infty (\ol{\Omega};\C)$, $f \in C^\infty(\ol{\Omega})$ and let $S \subset \ol{\Omega}$ be any (non-empty) set such that $\ol{S} \cap \{\chi(t,x)=0\} = \emptyset$. Then,
\begin{equation*}
\ep^{-k}|f e^{i\phi/\ep}| \leq C_{k}, \quad \text{for every $(t,x) \in \ol{S}$, $\ep \in \R_+$},
\end{equation*}
where $C_k = k^k e^{-k} \sup_{(t,x)\in \ol{S}} |f(t,x) /\chi^{k}(t,x)|$.
\end{lemma}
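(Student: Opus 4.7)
The plan is to reduce the estimate to a one-variable maximization of the function $y \mapsto y^k e^{-y}$ on $(0,\infty)$.

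First, I would observe that since $\phi = \varphi + i\chi$ with $\varphi,\chi$ real valued,
\begin{equation*}
|e^{i\phi/\ep}| = |e^{i\varphi/\ep}|\,e^{-\chi/\ep} = e^{-\chi/\ep},
\end{equation*}
so the quantity to be controlled is $\ep^{-k}|f(t,x)|\,e^{-\chi(t,x)/\ep}$. Next, because $\chi \geq 0$ in $\ol{\Omega}$ (definition \ref{complex_phase}) and $\ol{S}\cap\{\chi=0\}=\emptyset$ by hypothesis, we have $\chi(t,x)>0$ strictly on $\ol{S}$; hence the quotient $|f|/\chi^k$ is well defined and bounded on $\ol{S}$ (recall $\ol{\Omega}$ is compact, so the supremum is finite).

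The key algebraic step is to insert the factor $\chi^k/\chi^k$ and reorganize:
\begin{equation*}
\ep^{-k}|f|\,e^{-\chi/\ep} = \frac{|f|}{\chi^{k}} \cdot \Big(\frac{\chi}{\ep}\Big)^{k} e^{-\chi/\ep} = \frac{|f(t,x)|}{\chi^{k}(t,x)} \cdot y^{k} e^{-y}, \qquad y := \chi(t,x)/\ep>0.
\end{equation*}
The problem is now reduced to the pointwise estimate of $y^{k}e^{-y}$ for $y>0$.

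Finally, elementary calculus gives $\max_{y>0} y^{k}e^{-y} = k^{k} e^{-k}$ (for $k=0$ interpret $0^0=1$ and the bound is trivial, since $e^{-y}\leq 1$; for $k\geq 1$ differentiate to locate the critical point $y=k$). Taking the supremum over $(t,x)\in\ol{S}$ of the prefactor $|f|/\chi^{k}$ yields
\begin{equation*}
\ep^{-k}|f e^{i\phi/\ep}| \leq k^{k} e^{-k} \sup_{(t,x)\in\ol{S}} \frac{|f(t,x)|}{\chi^{k}(t,x)} = C_{k},
\end{equation*}
uniformly in $\ep>0$, which is the claimed inequality. There is no real obstacle here: the argument is purely algebraic once one recognizes that the compensation between the vanishing factor $e^{-\chi/\ep}$ and the divergent factor $\ep^{-k}$ is optimized, pointwise in $(t,x)$, by the scale invariant quantity $y^{k}e^{-y}$.
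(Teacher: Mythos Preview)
Your proof is correct. Note that the present paper does not actually prove this lemma here; it is quoted from Part I (\cite{MI}), so there is no in-paper proof to compare against. That said, your argument---writing $\ep^{-k}|f|e^{-\chi/\ep}=(|f|/\chi^k)\cdot y^k e^{-y}$ with $y=\chi/\ep$ and optimizing $y^k e^{-y}$ at $y=k$---is the standard elementary derivation and reproduces exactly the stated constant $C_k=k^k e^{-k}\sup_{\ol{S}}|f/\chi^k|$, so it is certainly consistent with what Part I contains. One minor remark: the positivity $\chi\geq 0$ is not stated in the lemma itself but is indeed inherited from the complex-phase setting (definition \ref{complex_phase}), as you correctly point out; without it the estimate would fail.
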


\begin{lemma}
\label{maslov-estimate2-small}
Let $\phi$ and $f$ be as in lemma \ref{maslov-estimate1} and let $R = \{\mathrm{Im}\phi = \chi=0\}$ be a one-dimensional submanifold admitting coordinates $(t,s) = \kappa(t,x)$ on a neighbourhood $\mathcal{O}\subset \ol{\Omega}$ as discussed in section \ref{subsec:phases}. We assume $\chi(t,s) \geq c |s|^q$ in $[0,T] \times \mathcal{I}_s$, with $c>0$ and $q>0$ an even integer. For every $k \in \N$, $t \in [0,T]$ and $s_1,s_2 \in \mathcal{I}_s$ with $s_1<0<s_2$, there are constants $C_k$ such that
\begin{equation*}
\big| \big(f(t,s) - \sum_{n < k} c_n(t) s^n \big)e^{i\phi(t,s)/\ep}\big| \leq \ep^{\frac{k}{q}}\sup_{s_1\leq s \leq s_2} |\partial^n_s f(t,s)|  C_k,
\end{equation*}
uniformly $s \in [s_1,s_2]$, $\ep \in \R_+$, and this estimate can be made uniform in $[0,T]\times [s_1,s_2]$; here, $f(t,s) = f\circ\kappa^{-1}(t,s)$ and analogously for the other functions, whereas $c_n(t) = \partial^n_s f(t,0)/n!$.
\end{lemma}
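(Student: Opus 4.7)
The plan is to recognize that $f(t,s) - \sum_{n<k} c_n(t) s^n$ is precisely the $k$-th Taylor remainder of the map $s \mapsto f(t,s)$ at $s=0$, since by assumption $c_n(t) = \partial_s^n f(t,0)/n!$. By Taylor's theorem with integral (or Lagrange) remainder it is therefore bounded in absolute value by $(|s|^k/k!)\, \sup_{\sigma \in [s_1,s_2]} |\partial^k_s f(t,\sigma)|$, uniformly on $[s_1,s_2]$. Meanwhile the modulus of the oscillatory factor is $|e^{i\phi/\ep}| = e^{-\chi/\ep} \leq e^{-c|s|^q/\ep}$ directly from the hypothesis on $\chi$. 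With these two ingredients in hand, the estimate reduces to a uniform bound on the product $|s|^k e^{-c|s|^q/\ep}$.

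To control that product I would use the rescaling $u = \ep^{-1/q}|s|$, which converts it into $\ep^{k/q}\, u^k e^{-c u^q}$. The map $u \mapsto u^k e^{-cu^q}$ is continuous on $[0,\infty)$ and decays super-polynomially at infinity, hence attains a finite maximum $M_k$ (one can even compute it explicitly via $u^q = k/(cq)$, but the exact value is irrelevant). Combining this with the Taylor remainder bound produces the desired inequality with $C_k = M_k/k!$. Uniformity in $t \in [0,T]$ is automatic: the right-hand side of the lemma is already written as a supremum in $s$ for each fixed $t$, and by continuity of $\partial_s^k f$ on the compact cylinder $[0,T] \times [s_1,s_2]$ the overall bound extends to a uniform estimate on $[0,T] \times [s_1,s_2]$.

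I do not anticipate a serious obstacle in this argument — it is essentially bookkeeping between Taylor's theorem and a standard Gaussian-type rescaling. The one point that deserves attention is that the power of $\ep$ must come out to be $k/q$ and not, for instance, $k$ or $k/2$; this is forced precisely by the order of vanishing of $\chi$ at $s=0$ encoded in the hypothesis $\chi(t,s) \geq c|s|^q$, and is captured transparently by the choice of rescaling. The evenness of $q$ is used only to make $|s|^q$ smooth and compatible with the lower bound on the non-negative function $\chi$, and plays no role in the Gaussian integral argument itself.
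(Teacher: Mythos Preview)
Your argument is correct and is the standard route to this kind of estimate: Taylor's remainder gives the factor $|s|^k$, the hypothesis $\chi\geq c|s|^q$ gives the damping $e^{-c|s|^q/\ep}$, and the rescaling $u=\ep^{-1/q}|s|$ extracts the exact power $\ep^{k/q}$. Note that the paper does not actually prove Lemma~\ref{maslov-estimate2-small} here; it is quoted from Part~I (reference~\cite{MI}), so there is no in-text proof to compare against, but your argument is almost certainly what Part~I does as well.
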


\begin{remark}
\label{new:applic}
It is worth noting that the complex phases $\phi_\mu$ obtained in section \ref{subsec:phases} are such that $\chi_\mu(t,s) \geq c |s|^2$, then we can apply lemma \ref{maslov-estimate2-small} with $q=2$.
\end{remark}

Let us recall that a Fourier multiplier applied to formal series is $O(\ep^h)$, $h \in \R$, when all the Fourier components of its image evaluated at $z=\phi/\ep$ are $O(\ep^h)$ uniformly in $\ol{\Omega}$, $\ep \in (0,\ep_0]$ where $0<\ep_0<1$. 

\begin{proposition}
\label{PQE}
Any formal Fourier multipliers $\mathds{E}$ and $Q$ equivalent to \eref{new:E} and \eref{new:Q} defined in section \ref{subsec:profiles} satisfy
\begin{align*}
\mathds{E}^2-\mathds{E}=O(\ep^{\frac{3}{2}}),\qquad\qquad\qquad\  &(\text{$\mathds{E}$ is an approximate projector}), \\ 
P \mathds{E} = \mathds{E} P = O(\ep^{\frac{3}{2}}),\qquad\qquad\quad &(\text{$\mathds{E}$ projects into the approximate kernel of $P$}),\\
PQ = QP =I - \mathds{E} + O(\ep^{\frac{1}{2}}),\quad\  &(\text{$Q$ is an approximate partial inverse of $P$}).
\end{align*}
\end{proposition}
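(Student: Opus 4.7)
The plan is to translate each of the three identities into a statement about the pointwise (in $(t,x,g,\gamma)$) matrix products of the Fourier symbols of $P$, $\mathds{E}$, $Q$, and then to promote vanishing of those products to an $O(\ep^h)$ bound by means of lemmas \ref{maslov-estimate1} and \ref{maslov-estimate2-small}. Since the three operators are Fourier multipliers with coefficients smooth in $(t,x)$, so are all of their composites; in particular, to prove for instance $\mathds{E}^2-\mathds{E}=O(\ep^{3/2})$ it suffices to exhibit, for each spectral index $(g,\gamma)\in\Sigma_\osc$, a uniform $|s|^3$-type bound for its symbol near the appropriate reference manifold, since by remark \ref{new:applic} and lemma \ref{maslov-estimate2-small} (with $q=2$, $k=3$) the termwise estimate is then $O(\ep^{3/2})$.

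The spectrum splits naturally: either $(g,\gamma)\in\Sigma_\mu$ for some $\mu$, or $\gamma$ has at least two nonzero components and $(g,\gamma)\in\Sigma_\osc\setminus\bigcup_\mu\Sigma_\mu$. In the latter case, because the reference manifolds $R_\mu$ are pairwise disjoint, the exponent $\langle\gamma,\chi\rangle=\sum_\mu\gamma_\mu\chi_\mu$ is bounded below by a positive constant on the compact set $\ol{\Omega}$; lemma \ref{maslov-estimate1} then gives $O(\ep^\infty)$ per term, so these harmonics contribute nothing to any of the three claims. All the work lies on the remaining harmonics $(g,\gamma)\in\Sigma_\mu$, where $\langle\gamma,\chi\rangle=\gamma_\mu\chi_\mu$ vanishes precisely on $R_\mu$ with the quadratic lower bound of remark \ref{new:applic}.

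On $\Sigma_\mu$, I expand each symbol near $R_\mu$ using the equivalence classes \eref{FM-equivalence} together with the spectral decomposition $\sigma_{L_0}(t,x,d\Psi)=i\sum_l(V_l\Psi)\pi_l$. By \eref{FM-equivalence} with $k=2$, the symbol $e(t,x,g,\gamma)$ of $\mathds{E}$ satisfies $e=\pi_{l(\mu)}+O(|s|^3)$; by \eref{FM-equivalence} with $k=0$, the symbol $q(t,x,g,\gamma)$ of $Q$ satisfies $q=-i\sum_{l\neq l(\mu)}(V_l\Psi)^{-1}\pi_l+O(|s|)$; and by proposition \ref{new:riccati}, $V_{l(\mu)}\Psi(g,\gamma;\phi)=g_\mu V_\mu\varphi_\mu+i\gamma_\mu V_\mu\chi_\mu=O(|s|^3)$. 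Using $\pi_l\pi_{l'}=\delta_{ll'}\pi_l$ and $\sum_l\pi_l=I$, a direct algebraic manipulation then yields $e^2-e=O(|s|^3)$, $\sigma_{L_0}(d\Psi)e=e\,\sigma_{L_0}(d\Psi)=O(|s|^3)$, and finally $\sigma_{L_0}(d\Psi)q-(I-e)=q\,\sigma_{L_0}(d\Psi)-(I-e)=O(|s|)$ near $R_\mu$. Lemma \ref{maslov-estimate2-small} with $q=2$ converts these into the termwise bounds $\ep^{3/2}$, $\ep^{3/2}$, $\ep^{1/2}$, respectively.

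The main obstacle is uniformity in $(g,\gamma)$: the constants in the $O(|s|^k)$ expansions above, and in the subsequent application of lemma \ref{maslov-estimate2-small}, must grow at most polynomially in $(g,\gamma)$ so that termwise summation preserves the Schwartz-type decay of the Fourier coefficients $\widehat{U}(t,x,g,\gamma)$ built into definition \ref{complex-profile-defn}. This is precisely where proposition \ref{new:coherence2} enters: combined with strict hyperbolicity it yields a bound $|V_l\Psi|\ge c|g_\mu|\ge c$ for $l\neq l(\mu)$ on a neighbourhood of $R_\mu$, making $(V_l\Psi)^{-1}$ and hence the symbol of $Q$ uniformly bounded, while the symbols of $P$ and $\mathds{E}$ are manifestly polynomial in $(g,\gamma)$. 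Once this uniformity is verified, summation over $(g,\gamma)\in\Sigma_\osc$ produces bounded remainders and the three asymptotic identities of the proposition follow.
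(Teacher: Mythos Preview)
Your proposal is correct and follows essentially the same route as the paper: reduce each claim to a termwise estimate on the Fourier symbol, observe that on $\Sigma_\mu$ the relevant symbol products are $O(|s|^3)$ (respectively $O(|s|)$) near $R_\mu$ thanks to $\pi_{l(\mu)}^2=\pi_{l(\mu)}$, proposition \ref{new:riccati}, and the spectral identity $\sigma_{L_0}(d\Psi)=i\sum_l(V_l\Psi)\pi_l$, and then convert these into $\ep^{3/2}$ (respectively $\ep^{1/2}$) bounds via lemma \ref{maslov-estimate2-small} with $q=2$, while lemma \ref{maslov-estimate1} handles everything away from the relevant $R_\mu$.

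Two minor remarks. First, your organisation is spectrum-first (split $\Sigma_\osc$ into $\bigcup_\mu\Sigma_\mu$ and its complement, then localise near $R_\mu$), whereas the paper localises first via the partition $1=(1-\sum\omega_{\mu,\ell})+\sum\omega_{\mu,\ell}$ and then discusses the spectral cases inside each piece; the content is identical. Second, proposition \ref{PQE} is stated for \emph{formal} series, where by definition $O(\ep^h)$ means each Fourier component separately satisfies the bound; your final paragraph on uniformity in $(g,\gamma)$ and summability is therefore not required here---it is exactly the additional argument the paper supplies later in proposition \ref{smooth-EQ} to pass from formal to genuine $PC_\osc^\infty$ estimates. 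Including it does no harm, but be aware you are proving slightly more than the proposition asks.
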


\begin{proof}
Let $\omega_{\mu,\ell}$ be functions defined in section \ref{subsec:profiles}, cf., also the proof of proposition 2.5 of Part I. The Fourier components of $\mathds{E}^2-\mathds{E}$ are estimated by
\begin{equation*}
\big|(\pi^2 - \pi)\widehat{U}e^{i\Psi/\ep}\big| \leq \big|(1-\sum \omega_{\mu,\ell})(\pi^2 - \pi)\widehat{U}e^{i\Psi/\ep}\big| + \sum \big|\omega_{\mu,\ell}(\pi^2 - \pi)\widehat{U}e^{i\Psi/\ep}\big|,
\end{equation*}
the sum being on both $\mu$ and $\ell$. In the first term $\mathrm{Im}\Psi$ never vanishes in the support of $(1-\sum_\ell \omega_{\mu,\ell})$; thus, we can apply lemma \ref{maslov-estimate1} and we get the estimate
\begin{subequations}
\label{estimates}
\begin{multline}
\label{estimateP1}
\big|(1-\sum \omega_{\mu,\ell})(\pi^2 - \pi)\widehat{U}e^{i\Psi/\ep}\big| \\
\leq \ep^k \sup_{\ol{S}} \big[|(1-\sum \omega_{\mu,\ell})(\pi^2 - \pi) \widehat{U}|/(\mathrm{Im}\Psi)^k\big] C_k,
\end{multline}
for every $k\in\N$ and uniformly in $\ol{\Omega}$; here, $S = \ol{\Omega} \cap \mathrm{supp}(1-\sum \omega_{\mu,\ell})$. As for the remaining terms, we have to distinguish each case, according to the definition of $\pi$. If $(g,\gamma) \in \Sigma_\mu$, we have
\begin{equation*}
\pi^2 - \pi = \pi_{l(\mu)}^2 - \pi_{l(\mu)} + O(|s|^3) = O(|s|^3);
\end{equation*}
we can apply lemma \ref{maslov-estimate2-small}, with $k=3$ and $q=2$, and get
\begin{equation}
\label{estimateP2}
\big|\omega_{\mu,\ell}(\pi^2 - \pi)\widehat{U}e^{i\Psi/\ep}\big| \leq \ep^{\frac{3}{2}} \sup \big|\partial_s^{3} \big(\omega_{\mu,\ell}(\pi^2 - \pi)\widehat{U}\big)\big|C,
\end{equation}
in $\ol{\Omega}$. If $(g,\gamma) \in \Sigma_\nu$ with $\nu \not = \mu$ or $(g,\gamma) \not\in \Sigma_1 \cup \cdots \cup \Sigma_m$, we have $\mathrm{Im} \Psi>0$ in $\mathrm{supp}(\omega_{\mu,\ell})$ and we can apply again lemma \ref{maslov-estimate1} with the result that
\begin{equation}
\label{estimateP3}
\big|\omega_{\mu,\ell}(\pi^2 - \pi)\widehat{U}e^{i\Psi/\ep}\big| \leq \ep^k \sup_{\ol{S}} \big[|\omega_{\mu,\ell}(\pi^2 - \pi) \widehat{U}|/(\mathrm{Im}\Psi)^k\big] C_k,
\end{equation}
\end{subequations}
for every $k\in\N$, uniformly in $\ol{\Omega}$ and, here, $S = \ol{\Omega} \cap \mathrm{supp}(\omega_{\mu,\ell})$.

Analogously we estimate the Fourier components of $P\mathds{E}$ on noting that when $(g,\gamma)\in \Sigma_\mu$ we have 
\begin{equation*}
-i\sigma_{L_0}(t,x,d\Psi) \pi(t,x,g,\gamma) = V_{\mu}(t,x)\Psi(g,\gamma;\phi) \pi_{l(\mu)} (t,x) = O(|s|^3),
\end{equation*}
in view of proposition \ref{new:riccati} and the same holds for $\pi (-i\sigma_{L_0})$. 

The argument is repeated also for the last assertion, on noting that, 
\begin{align*}
\sigma_{L_0}(t,x,d\Psi) \mathscr{Q}^\phi(t,x,g,\gamma) &= \sum_{l'\not = l(\mu)} \frac{-i \sigma_{L_0}(t,x,d\langle g,\varphi \rangle) \pi_{l'}(t,x)}{\langle g, V_{l'}(t,x)\varphi(t,x) \rangle}\Bigg|_{R_\mu} + O(|s|)\\
& = \sum_{l'\not = l(\mu)} \pi_{l'}(t,x)_{|_{R_\mu}} + O(|s|)\\
& = (I - \pi(t,x,g,\gamma)) + O(|s|),
\end{align*}
when $(g,\gamma) \in \Sigma_\mu$. The same also holds for $\mathscr{Q}^\phi \sigma_{L_0}$.
\end{proof}

We can now go back to equations \eref{eq1-nonlinear} and \eref{eq2-nonlinear}. Let us simplify the notation by writing $\mathcal{U}_j$ instead of $\mathcal{U}^{(j)}$ for $j=0,1$. Equation \eref{eq1-nonlinear} reads
\begin{equation*}
P\mathcal{U}_0  = P \mathds{E} \mathcal{U}_0 + P(I-\mathds{E}) \mathcal{U}_0 = O(\ep^{\frac{3}{2}}),
\end{equation*}
which is equivalent to
\begin{equation}
\label{amp1-nonlinear}
P(I-\mathds{E})\mathcal{U}_0 = O(\ep^{\frac{3}{2}}).
\end{equation}
Upon multiplication by the partial inverse $Q$, this yields $(I-\mathds{E})\mathcal{U}_0 = O(\ep^{\frac{1}{2}})$, but this is not sharp as we shall see in corollary \ref{amplitudes-formal} below. Moreover, on multiplying equation \eref{eq2-nonlinear} by $\mathds{E}$ we find the necessary condition
\begin{equation}
\label{amp2-nonlinear}
\mathds{E} N = O(\ep^{\frac{1}{2}}),
\end{equation}
whereas, on multiplying by $I-\mathds{E}$ and on setting $\mathds{E}\mathcal{U}_1=0$, we get
\begin{equation}
\label{amp3-nonlinear}
(I-\mathds{E})P(I-\mathds{E}) \mathcal{U}_1 + (I-\mathds{E})N = O(\ep^{\frac{1}{2}}).
\end{equation}

\begin{corollary}
\label{amplitudes-formal}
Let $\mathcal{U}_0$, $\mathcal{U}_1$, $\mathcal{V} =N(\mathcal{U}_0)$ and $\mathcal{W}$ be formal oscillatory series with $\mathds{E} \mathcal{V} = O(\ep^{\frac{1}{2}})$ and $(I- \mathds{E})\mathcal{W}=O(\ep^{\frac{1}{2}})$. Then, 
\begin{itemize}
\item[a)] equation \eref{eq1-nonlinear} is equivalent to $(I-\mathds{E})\mathcal{U}_0=O(\ep^{\frac{3}{2}})$,
\item[b)] the formal solution of equation \eref{eq2-nonlinear} is $\mathcal{U}_1 = -Q\mathcal{V} + \mathcal{W}$.
\end{itemize}
\end{corollary}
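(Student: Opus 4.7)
The plan is to reduce both parts to the Fourier-multiplier identities of proposition \ref{PQE}, checking them componentwise via lemmas \ref{maslov-estimate1} and \ref{maslov-estimate2-small}. Part (a) is the delicate one: the implication $\Leftarrow$ follows directly from proposition \ref{PQE}, but the converse loses a factor $\ep^{1/2}$ if one simply inverts with $Q$, so a sharper componentwise argument is required. Part (b) is a direct substitution into \eref{eq2-nonlinear}.

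For part (a), the easy direction writes $P\mathcal{U}_0 = P\mathds{E}\mathcal{U}_0 + P(I-\mathds{E})\mathcal{U}_0$: the first summand is $O(\ep^{3/2})$ by proposition \ref{PQE}, and the second is $O(\ep^{3/2})$ componentwise because the symbol $\sigma_{L_0}(t,x,d\Psi)$ is bounded on each Fourier mode. For the converse, I would avoid $Q$ (which only gives $(I-\mathds{E})\mathcal{U}_0 = O(\ep^{1/2})$ via $QP = I-\mathds{E} + O(\ep^{1/2})$) and work mode by mode. On harmonics $(g,\gamma)\notin\bigcup_\mu\Sigma_\mu$, or at points away from $R$, one has $\mathrm{Im}\Psi>0$ and lemma \ref{maslov-estimate1} already produces $O(\ep^k)$ for every $k$. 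On a harmonic $(g,\gamma)\in\Sigma_\mu$ and near $R_\mu$ I would decompose
\[
\sigma_{L_0}(t,x,d\Psi)\widehat{U}_0 = iV_\mu\Psi\,\pi_{l(\mu)}\widehat{U}_0 + \sum_{l\not=l(\mu)} iV_l\Psi\,\pi_l\widehat{U}_0.
\]
Proposition \ref{new:riccati} gives $V_\mu\Psi = O(|s|^3)$, so lemma \ref{maslov-estimate2-small} with $q=2$, $k=3$ makes the first summand contribute $O(\ep^{3/2})$ to $P\mathcal{U}_0\,e^{i\Psi/\ep}$ unconditionally; subtracting this from the assumed $O(\ep^{3/2})$ total leaves $\sum_{l\not=l(\mu)} iV_l\Psi\,\pi_l\widehat{U}_0 e^{i\Psi/\ep} = O(\ep^{3/2})$. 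By proposition \ref{new:coherence2} each $V_l\Psi$ is smooth and bounded away from zero on $R_\mu$, so the block $\sum_{l\not=l(\mu)}iV_l\Psi\,\pi_l$ has a smooth bounded inverse on the range of $I-\pi_{l(\mu)}$; applying it yields $(I-\pi_{l(\mu)})\widehat{U}_0 e^{i\Psi/\ep} = O(\ep^{3/2})$. Reassembling with the $O(\ep^k)$ estimates from lemma \ref{maslov-estimate1} elsewhere gives $(I-\mathds{E})\mathcal{U}_0 = O(\ep^{3/2})$.

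Part (b) is a direct computation. Substituting $\mathcal{U}_1 = -Q\mathcal{V} + \mathcal{W}$ into \eref{eq2-nonlinear} gives
\[
P\mathcal{U}_1 + \mathcal{V} = -PQ\mathcal{V} + \mathcal{V} + P\mathcal{W}.
\]
By proposition \ref{PQE}, $-PQ\mathcal{V} + \mathcal{V} = \mathds{E}\mathcal{V} + O(\ep^{1/2}) = O(\ep^{1/2})$ by the hypothesis $\mathds{E}\mathcal{V}=O(\ep^{1/2})$. Splitting $\mathcal{W} = \mathds{E}\mathcal{W} + (I-\mathds{E})\mathcal{W}$, proposition \ref{PQE} gives $P\mathds{E}\mathcal{W} = O(\ep^{3/2})$, while $P(I-\mathds{E})\mathcal{W} = O(\ep^{1/2})$ by boundedness of $\sigma_{L_0}(t,x,d\Psi)$ on each mode together with the hypothesis $(I-\mathds{E})\mathcal{W} = O(\ep^{1/2})$. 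Summing, $P\mathcal{U}_1 + \mathcal{V} = O(\ep^{1/2})$, as required.

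The main obstacle is the sharpness issue in the $\Rightarrow$ direction of part (a): the approximate partial inverse $QP = I-\mathds{E} + O(\ep^{1/2})$ is too lossy by a full factor of $\ep$. The remedy, as above, is to bypass $Q$ and exploit the block structure of $\sigma_{L_0}(t,x,d\Psi)$ at each Fourier mode, combining the cubic vanishing on the coherent eigenspace (proposition \ref{new:riccati}) with smooth invertibility on the complementary eigenspace (proposition \ref{new:coherence2}) and the $\ep^{3/2}$-gain of lemma \ref{maslov-estimate2-small} at the optimal $k=3$.
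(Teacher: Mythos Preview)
Your proposal is correct and follows essentially the same approach as the paper's proof. For part (a), the paper first passes to $P(I-\mathds{E})\mathcal{U}_0 = O(\ep^{3/2})$ (using $P\mathds{E} = O(\ep^{3/2})$) before analyzing Fourier coefficients, whereas you decompose $\sigma_{L_0}\widehat{U}_0$ directly and peel off the $V_\mu\Psi\,\pi_{l(\mu)}$ piece via proposition~\ref{new:riccati}; these are two organizations of the same subtraction, and both conclude by inverting the nondegenerate block $\sum_{l\neq l(\mu)} iV_l\Psi\,\pi_l$ using proposition~\ref{new:coherence2}. Part (b) is essentially identical to the paper's argument, with your version spelling out slightly more explicitly why $P(I-\mathds{E})\mathcal{W} = O(\ep^{1/2})$.
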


\begin{proof}
a) We have just shown that equation \eref{eq1-nonlinear} is formally equivalent to \eref{amp1-nonlinear} which is implied by $(I-\mathds{E})\mathcal{U}_0 = O(\ep^{\frac{3}{2}})$. We have to show that this is a necessary condition too. 
It is enough to look at the Fourier coefficients of $P(I-\mathds{E})\mathcal{U}_0$ with $(g,\gamma) \in \Sigma_\mu$ in the support of $\omega_{\mu,\ell}$ that are
\begin{equation*}
\sum_{l \not = l(\mu)} (V_l(t,x)\Psi) \pi_l(t,x) \widehat{U}(t,x,g,\gamma) + O(|s|^3). 
\end{equation*}
In view of proposition \ref{new:coherence2}, $|V_l(t,x)\Psi|\geq C>0$ in the support of $\omega_{\mu,\ell}$, at least after shrinking the open sets $\mathcal{O}_{\mu,\ell}$ in the $s$ direction, thus, equation \eref{amp1-nonlinear} entails
\begin{equation*}
\sum_{l\not=l(\mu)} \pi_l(t,x) \widehat{U}(t,x,g,\gamma) e^{i\Psi/\ep} = O(\ep^{\frac{3}{2}}),
\end{equation*}
and on the left-hand side there are just the coefficients of $(I-\mathds{E})\mathcal{U}_0$. 

b) Let us note that, $PQ\mathcal{V} = (I-\mathds{E})\mathcal{V} + O(\ep^{\frac{1}{2}}) = \mathcal{V} + O(\ep^{\frac{1}{2}})$, whereas $P\mathcal{W} = P(I-\mathds{E})\mathcal{W} +O(\ep^{\frac{3}{2}})$, with $P(I-\mathds{E})\mathcal{W}=O(\ep^{\frac{1}{2}})$ by hypothesis.
\end{proof}

\section{Solution in $PC^\infty_\osc$ and the Transport Equation}
\label{smooth-solvability}

In this section, we shall deal with the convergence of the formal series addressed in section \ref{local-coherence}. 

By definition \ref{complex-profile-defn}, we know that $| \widehat{U}(t,x,g,\gamma)| \leq C_k |(g,\gamma)|^{-k}$ for every $k \in \N$ uniformly in $\ol{\Omega}$, and the same holds for any derivative $\partial_t^n\partial_{x}^m \widehat{U}(t,x,g,\gamma)$. On the other hand, $\sigma_{L_0}(t,x,d\Psi(g,\gamma;\phi))$ is homogeneous of degree $1$ in $(g,\gamma)$ so that $\sigma_{L_0}\widehat{U} = O(|(g,\gamma)|^{-k})$ for every $k$, hence, the series of such coefficients is uniformly convergent in $\ol{\Omega} \times \T_c^m$. Moreover, the same holds for the coefficients of $\partial^n_t \partial_{x}^m P\mathcal{U}(t,x,z)$, hence, $P$ is continuous $:PC^\infty \to PC^\infty$ and restricts to a continuous operator $:PC^\infty_\osc \to PC^\infty_\osc$ still denoted $P$. This is easy because $P$ is a first-order differential operator, thus, it is clearly well defined on $C^\infty$ functions. On the other hand, the case of $\mathds{E}$ and $Q$ requires some efforts.

\begin{proposition}
\label{smooth-EQ}
The formal operators \eref{new:E} and \eref{new:Q} are continuous Fourier multipliers $:PC_\osc^\infty \to PC_\osc^\infty$ for which the formal estimates proved in proposition \ref{PQE} and corollary \ref{amplitudes-formal} are rigorously valid in $PC_\osc^\infty$.
\end{proposition}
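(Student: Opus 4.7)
The plan is to split the argument into two independent tasks: (a) verifying that $\mathds{E}$ and $Q$ define continuous Fourier multipliers $PC^\infty_\osc \to PC^\infty_\osc$, and (b) upgrading the termwise formal estimates of Proposition \ref{PQE} (and hence of Corollary \ref{amplitudes-formal}) to genuine uniform bounds on the full series evaluated at $z = \phi(t,x)/\ep$.

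For (a), I would verify that the symbols $\pi(t,x,g,\gamma)$ and $\mathscr{Q}^\phi(t,x,g,\gamma)$ are smooth in $(t,x)$ with all derivatives uniformly bounded in $(g,\gamma)$. The symbol $\pi$ takes only two kinds of $(t,x)$-dependent values, namely $I$ or $\sum_\ell \omega_{\mu,\ell}\pi_{l(\mu)}$, so it is trivially bounded together with every tangential derivative. For $\mathscr{Q}^\phi$, Proposition \ref{new:coherence2} together with the linearity of $V_l\Psi(g,\gamma;\phi)$ in $(g,\gamma)$ yields $|V_l\Psi|\geq c|g_\mu|>0$ on $R_\mu$; the cut-off extension by $\omega_{\mu,\ell}$ preserves smoothness and a uniform bound on $(V_l\Psi)^{-1}$. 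Hence both $\pi\widehat{U}$ and $\mathscr{Q}^\phi\widehat{U}$ inherit the rapid decay and smoothness demanded by Definition \ref{complex-profile-defn}. Since both symbols act diagonally on Fourier modes, the condition $\widehat{U}(\cdot,0,\gamma)=0$ characterising $PC^\infty_\osc$ is preserved, and continuity in the seminorm topology follows from the product rule applied to $X^\alpha(\pi\widehat{U})$ and $X^\alpha(\mathscr{Q}^\phi\widehat{U})$.

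For (b), fix $\mathcal{U}\in PC^\infty_\osc$ and consider for instance $|(\mathds{E}^2-\mathds{E})\mathcal{U}(t,x,\phi/\ep)|$. I would expand it as a series over $(g,\gamma)\in\Sigma_\osc$ and, for each term, use the decomposition $1=(1-\sum_{\mu,\ell}\omega_{\mu,\ell})+\sum_{\mu,\ell}\omega_{\mu,\ell}$ employed in the proof of Proposition \ref{PQE}. The bounds \eref{estimateP1}--\eref{estimateP3} yield, in each region, either an $\ep^{3/2}$ control (inside $\mathrm{supp}(\omega_{\mu,\ell})$ for $(g,\gamma)\in\Sigma_\mu$, via lemma \ref{maslov-estimate2-small} with $q=2$) or an $\ep^k$ control for arbitrary $k$ (elsewhere, via lemma \ref{maslov-estimate1}). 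To sum these termwise bounds the key observation is that on the latter regions $\mathrm{Im}\Psi=\sum_\nu \gamma_\nu\chi_\nu$ admits a linear lower bound $\mathrm{Im}\Psi\geq c_0|\gamma|$ with $c_0>0$ independent of $(g,\gamma)$, because at least one $\chi_\nu$ is bounded below by a positive constant there. Combined with the rapid decay $\sup|\partial^\alpha_{t,x}\widehat{U}(\cdot,g,\gamma)|\leq C_{\alpha,k}|(g,\gamma)|^{-k}$, the factor $(\ep/|\gamma|)^k$ makes each sum absolutely convergent with bound of order $\ep^{3/2}$ uniformly in $\ol{\Omega}\times(0,\ep_0]$. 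The identities for $P\mathds{E}$, $\mathds{E}P$, $PQ$, $QP$ and the corresponding statements in Corollary \ref{amplitudes-formal} follow by exactly the same argument after replacing $\pi^2-\pi=O(|s|^3)$ with the corresponding symbol identities used in the proof of Proposition \ref{PQE}.

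The main obstacle, I expect, is to secure uniformity in $(g,\gamma)$ of the constants produced by the Maslov-type lemmas; specifically, the factor $C_k=k^ke^{-k}\sup|f/\chi^k|$ in lemma \ref{maslov-estimate1} must be controlled by a polynomial in $|\gamma|^{-1}$, which in turn reduces to the linear lower bound $\mathrm{Im}\Psi\geq c_0|\gamma|$ mentioned above. Here the structure of the spectrum $\Sigma_\osc$ (the constraint $|g_\mu|\leq\gamma_\mu$ prevents pathological scalings) and the construction of the cut-offs $\omega_{\mu,\ell}$ both play essential roles.
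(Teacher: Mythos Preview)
Your proposal is correct and follows essentially the same route as the paper: first bound the symbols $\pi$ and $\mathscr{Q}^\phi$ together with all their $(t,x)$-derivatives (using proposition~\ref{new:coherence2} for the denominators in $\mathscr{Q}^\phi$) to get continuity on $PC^\infty_\osc$, and then observe that the constants on the right-hand sides of the termwise bounds \eref{estimates} inherit the rapid decay of $\widehat{U}$, so the series converge absolutely and the $O(\ep^{3/2})$, $O(\ep^{1/2})$ estimates survive summation.

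One small correction: your ``key observation'' that $\mathrm{Im}\Psi\geq c_0|\gamma|$ on the regions handled by lemma~\ref{maslov-estimate1} is not quite true in the subcase $(g,\gamma)\notin\Sigma_1\cup\cdots\cup\Sigma_m$ inside $\mathrm{supp}(\omega_{\mu,\ell})$. There only $\chi_\nu$ with $\nu\neq\mu$ are bounded below, while $\chi_\mu$ may vanish; taking $\gamma_\mu$ large and a single other $\gamma_\nu=1$ gives $\mathrm{Im}\Psi\geq c$ but $|\gamma|\sim\gamma_\mu$. This does not damage the argument: a uniform lower bound $\mathrm{Im}\Psi\geq c>0$ already makes $(\mathrm{Im}\Psi)^{-k}$ bounded in $(g,\gamma)$, and the rapid decay of $\widehat{U}$ alone then gives absolute convergence with an $\ep^k$ prefactor---which is exactly how the paper phrases it (``the coefficients \ldots\ are all as rapidly decreasing as $\widehat{U}$'').
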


\begin{proof}
If $\mathcal{U}\in PC^\infty$ and $\alpha\in\N^2$, $\big|\partial_{t,x}^\alpha \big(\pi(t,x,g,\gamma) \widehat{U}(t,x,g,\gamma)\big)\big| \leq C |(g,\gamma)|^{-k}$ for every $k \in \N$ since the coefficients $\pi(t,x,g,\gamma)$ and all their derivatives are bounded for $(t,x)\in\ol{\Omega}$ and $(g,\gamma)\in\Sigma$. Therefore, $\mathds{E}$ is continuous $:PC_\osc^\infty \to PC_\osc^\infty$.

As for \eref{new:Q}, if $(g,\gamma) \in \Sigma_\mu$, for every $k \in \N$,
\begin{equation*}
|\mathscr{Q}^\phi (t,x,g,\gamma)| \leq \sum_{l'\not=l} \left|\frac{\pi_{l'}(t,x)}{V_{l'}(t,x)\langle g,\varphi(t,x) \rangle} \right|, 
\end{equation*}
in a compact near $R_\mu$, hence, $|\mathscr{Q}^\phi| \leq C |(g,\gamma)|^{-1}$ in virtue of proposition \ref{new:coherence2}. We need to control the derivative $\partial_t\mathscr{Q}^\phi$ only. We have
\begin{equation*}
\partial_t \mathscr{Q}^\phi(t,x,g,\gamma) = \sum_{l'\not =l}\Bigg[ \frac{\partial_t \pi_{l'}(t,x)}{V_{l'}(t,x)\langle g,\varphi \rangle} -\frac{\partial_t (V_{l'}(t,x)\langle  g,\varphi\rangle)}{(V_{l'}(t,x)\langle  g,\varphi\rangle)^2} \pi_{l'}(t,x) \Bigg],\ (t,x) \in R_\mu,
\end{equation*}
thus, $|\partial_t \mathscr{Q}^\phi| \leq C |(g,\gamma)|^{-1}$, again because of proposition \ref{new:coherence2} and this extends to higher orders, i.e., $|\partial_t^n \mathscr{Q}^\phi|\leq C_n |(g,\gamma)|^{-1}$. Then, $Q$ is continuous $:PC^\infty_\osc \to PC^\infty_\osc$.

As for the formal estimate in proposition \ref{PQE}, it is enough noting that the coefficients in the right-hand sides of estimates \eref{estimates} and analogous are all as rapidly decreasing as $\widehat{U}$, hence, the corresponding series are absolutely convergent. The same holds also for corollary \ref{amplitudes-formal}. 
\end{proof}

Now we prove that the estimate of proposition \ref{PQE} and corollary \ref{amplitudes-formal} hold for any pair Fourier multiplier $\mathds{E}$ and $Q$ that are equivalent to \eref{new:E} and \eref{new:Q}, respectively. More generally, the next result shows what happens when one varies a Fourier multiplier $:PC_\osc^\infty \to PC_\osc^\infty$ within a fixed equivalence class.

\begin{proposition}
\label{modulo}
If $A_1$ and $A_2$ are Fourier multipliers $:PC_\osc^\infty \to PC_\osc^\infty$ equivalent modulo the relation \eref{FM-equivalence}, then $A_1-A_2 =O(\ep^{\frac{k+1}{2}})$ uniformly in $\ol{\Omega}$.
\end{proposition}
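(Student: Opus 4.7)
The plan is to mimic the proof of Proposition \ref{PQE}: reduce the estimate of $(A_1 - A_2)\mathcal{U}(t,x,\phi/\ep)$ to the two Maslov-type lemmas \ref{maslov-estimate1} and \ref{maslov-estimate2-small}, applied termwise to the Fourier expansion. Let $D := A_1 - A_2$, a Fourier multiplier whose coefficient $D(t,x,g,\gamma)$ satisfies, by definition of the equivalence \eref{FM-equivalence}, $D(\cdot,g,\gamma) = O(|s|^{k+1})$ near $R_\mu$ in the chart $(t,s)$ whenever $(g,\gamma) \in \Sigma_\mu$, and is otherwise unconstrained on the remaining part of $\Sigma_\osc$. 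Given $\mathcal{U} \in PC_\osc^\infty$, the task is to bound
\begin{equation*}
(A_1 - A_2)\mathcal{U}(t,x,\phi/\ep) = \sum_{(g,\gamma) \in \Sigma_\osc} D(t,x,g,\gamma)\widehat{U}(t,x,g,\gamma) e^{i\Psi(g,\gamma;\phi)/\ep}
\end{equation*}
by $O(\ep^{(k+1)/2})$ uniformly on $\ol{\Omega}$.

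I would split the sum exactly as in \eref{estimates}. For $(g,\gamma) \in \Sigma_\mu$, insert the partition $1 = \sum_\ell \omega_{\mu,\ell} + (1 - \sum_\ell \omega_{\mu,\ell})$: on $\mathrm{supp}(\omega_{\mu,\ell})$ the bound $\chi_\mu(t,s) \geq c|s|^2$ (remark \ref{new:applic}) together with the vanishing of the degree-$k$ Taylor polynomial of $D\widehat{U}$ in $s$ permits Lemma \ref{maslov-estimate2-small} to be applied at order $k+1$ with $q=2$, producing the desired $\ep^{(k+1)/2}$ estimate; on the complement $\chi_\mu$ is bounded below by a positive constant, so $\mathrm{Im}\Psi \geq \gamma_\mu c > 0$ and Lemma \ref{maslov-estimate1} yields $O(\ep^K)$ for arbitrary $K$. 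For $(g,\gamma) \in \Sigma_\osc \setminus \bigcup_\mu \Sigma_\mu$, at least two of the components $\gamma_{\nu_1},\gamma_{\nu_2}$ are non-zero; since the reference manifolds $R_{\nu_1}$ and $R_{\nu_2}$ are pairwise disjoint compact subsets of $\ol{\Omega}$, $\mathrm{Im}\Psi \geq \chi_{\nu_1} + \chi_{\nu_2}$ admits a strictly positive lower bound that is uniform in such $(g,\gamma)$ (taking the minimum over the finite set of pairs $(\nu_1,\nu_2) \in \{1,\ldots,m\}^2$), so Lemma \ref{maslov-estimate1} again gives $O(\ep^K)$ for any $K$.

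The main difficulty is the convergence of the resulting series of bounds over $\Sigma_\osc$. The constants appearing in Lemma \ref{maslov-estimate2-small} involve $\sup_s |\partial_s^{k+1}(\omega_{\mu,\ell} D \widehat{U})|$, while those produced by Lemma \ref{maslov-estimate1} involve $\sup|D\widehat{U}|/(\mathrm{Im}\Psi)^K$ for a uniform positive lower bound on $\mathrm{Im}\Psi$. Since $\widehat{U}(\cdot,g,\gamma)$ and all its $(t,x)$-derivatives are rapidly decreasing in $(g,\gamma)$ (definition \ref{complex-profile-defn}), and the multiplier coefficients of $A_1,A_2$ grow at most polynomially in $(g,\gamma)$ --- a property inherent to any Fourier multiplier continuous $PC_\osc^\infty \to PC_\osc^\infty$, as exploited in the proof of Proposition \ref{smooth-EQ} --- all bounds are absolutely summable. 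Summing termwise then gives $(A_1-A_2)\mathcal{U} = O(\ep^{(k+1)/2})$ uniformly on $\ol{\Omega}$, as claimed. The delicate point is precisely the last one: verifying that the Taylor-remainder factors $\partial_s^{k+1}(\omega_{\mu,\ell} D\widehat{U})$ retain enough decay in $(g,\gamma)$ to render the series summable uniformly in $\ep$.
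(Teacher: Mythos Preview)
Your proposal is correct and follows essentially the same route as the paper: both proofs decompose each Fourier term via the partition of unity $\{\omega_{\mu,\ell}\}$, apply Lemma~\ref{maslov-estimate1} wherever $\mathrm{Im}\Psi$ is bounded away from zero and Lemma~\ref{maslov-estimate2-small} (with $q=2$, order $k+1$) on $\mathrm{supp}(\omega_{\mu,\ell})$ when $(g,\gamma)\in\Sigma_\mu$, and then invoke the rapid decay of $\widehat{U}$ together with the boundedness of the multiplier coefficients (as in Proposition~\ref{smooth-EQ}) to sum the termwise bounds. Your treatment of the case $(g,\gamma)\notin\bigcup_\mu\Sigma_\mu$ via two nonvanishing $\gamma$-components and the disjointness of the $R_\nu$ is slightly more explicit than the paper's, but the argument is the same.
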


\begin{proof}
Let $\mathscr{A}_j(t,x,g,\gamma)$ be the coefficients of $A_j$. We apply our usual argument of propositions \ref{PQE} and \ref{smooth-EQ}, and we write 
\begin{equation*}
\big| (\mathscr{A}_1-\mathscr{A}_2) \widehat{U} e^{i\Psi/\ep}\big| \leq
\big|(1-\sum \omega_{\mu,\ell}) (\mathscr{A}_1-\mathscr{A}_2) \widehat{U} e^{i\Psi/\ep}\big| + \sum  \big| \omega_{\mu,\ell}(\mathscr{A}_1-\mathscr{A}_2) \widehat{U} e^{i\Psi/\ep}\big|,
\end{equation*}
with the sum being over both $\mu$ and $\ell$. For the first term we apply lemma \ref{maslov-estimate1} and show that it is $O(\ep^\infty)$. The remaining terms are estimated again by lemma \ref{maslov-estimate1} if $(g,\gamma) \in \Sigma_\nu$ with $\nu \not = \mu$ or $(g,\gamma) \not\in \Sigma_1 \cup\cdots\cup\Sigma_m$ and by lemma \ref{maslov-estimate2-small} if $(g,\gamma) \in \Sigma_\mu$ which, in particular, gives the $O(\ep^{\frac{k+1}{q}})$ estimate with $q=2$. Moreover, as in proposition \ref{smooth-EQ}, the series are convergent.
\end{proof}

On the basis of proposition \ref{amplitudes-formal}, strengthened in proposition \ref{smooth-EQ}, we have to find $\mathcal{U}_0$ such that $(I-\mathds{E})\mathcal{U}_0 = O(\ep^{\frac{3}{2}})$ and $\mathds{E} N(\mathcal{U}_0)=O(\ep^{\frac{1}{2}})$. We start with the analogous of proposition 5.1 of Part I. We shall denote by $\mathds{E}_0$ the restriction of $\mathds{E}$ to a Fourier multiplier $:PC_\osc^\infty(R;\C^N) \to PC^\infty_\osc(R;\C^N)$.

\begin{proposition}
\label{furbata2}
Let $\mathcal{U} \in PC_\osc^\infty(R;\C^N)$ be such that $(I-\mathds{E}_0) \mathcal{U} = 0$ and let $\underline{\mathcal{U}} \in PC^\infty_\osc(\ol{\Omega};\C^N)$ be a smooth extension of $\mathcal{U}$ to a compact neighbourhood of $R$. Then, on defining $\mathcal{U}_0(t,x,z) = \mathds{E} \underline{\mathcal{U}}(t,x,z)$, $\mathds{E}$ being any operator that belongs to the class of \eref{new:E}, we have
\begin{equation*}
(I-\mathds{E})\mathcal{U}_0(t,x,\phi/\ep) = O(\ep^{\frac{3}{2}}),\quad \text{and}\quad \mathcal{U}_0(t,x,\phi/\ep) - \underline{\mathcal{U}}(t,x,\phi/\ep) = O(\ep^{\frac{1}{2}}),
\end{equation*}
the estimates being uniform in $\ol{\Omega}$.
\end{proposition}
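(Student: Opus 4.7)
The plan is to dispatch the two estimates separately. The first is essentially algebraic and reduces to the approximate projector identity of proposition \ref{PQE}: I would simply write
\[
(I - \mathds{E})\mathcal{U}_0 = (I - \mathds{E})\mathds{E}\underline{\mathcal{U}} = -(\mathds{E}^2 - \mathds{E})\underline{\mathcal{U}},
\]
and invoke proposition \ref{PQE}, which by proposition \ref{smooth-EQ} is rigorously valid on $PC^\infty_\osc$, to conclude that the right-hand side is $O(\ep^{\frac{3}{2}})$ uniformly in $\ol{\Omega}$.

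For the second estimate I would unravel $\mathcal{U}_0 - \underline{\mathcal{U}} = -(I - \mathds{E})\underline{\mathcal{U}}$ into its formal Fourier series, whose coefficients are $(\pi(t,x,g,\gamma) - I)\widehat{\underline{U}}(t,x,g,\gamma)$, and estimate each term evaluated at $z = \phi/\ep$ following the three-case scheme already used in the proof of proposition \ref{PQE}. On the support of $1 - \sum_{\mu,\ell}\omega_{\mu,\ell}$, $\mathrm{Im}\Psi$ is bounded away from zero, so lemma \ref{maslov-estimate1} yields $O(\ep^\infty)$. On $\mathrm{supp}(\omega_{\mu,\ell})$, if $(g,\gamma) \in \Sigma_\nu$ with $\nu \neq \mu$ or $(g,\gamma) \notin \bigcup_\nu \Sigma_\nu$, then after a possible shrinking of $\mathcal{O}_{\mu,\ell}$ the function $\mathrm{Im}\Psi$ remains strictly positive -- the reference manifolds are pairwise disjoint and in the second subcase at least two components of $\chi$ contribute with strictly positive weight -- so lemma \ref{maslov-estimate1} applies again. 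The delicate case is $(g,\gamma) \in \Sigma_\mu$ on $\mathrm{supp}(\omega_{\mu,\ell})$: here, exploiting the equivalence \eref{FM-equivalence} with $k = 2$, I can take $\pi(t,x,g,\gamma) = \pi_{l(\mu)}(t,x) + O(|s|^3)$, while the hypothesis $(I - \mathds{E}_0)\mathcal{U} = 0$ on $R$ forces $(I - \pi_{l(\mu)})\widehat{U} = 0$ on $R_\mu$, so the smooth extension obeys $(I - \pi_{l(\mu)})\widehat{\underline{U}} = O(|s|)$ near $R_\mu$. Applying lemma \ref{maslov-estimate2-small} with $k = 1$ and $q = 2$ then yields the $O(\ep^{\frac{1}{2}})$ bound, while the $O(|s|^3)$ remainder only contributes at the higher order $O(\ep^{\frac{3}{2}})$ (cf.\ remark \ref{new:applic}).

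Convergence of the series of these pointwise bounds, and hence the uniformity of the resulting estimate on $\ol{\Omega}$, is handled exactly as in propositions \ref{smooth-EQ} and \ref{modulo}, using the rapid decay of the Fourier coefficients of $\underline{\mathcal{U}}$ together with their tangential derivatives. I expect the main obstacle to be the last case analysis: one must verify that the \emph{first-order} vanishing of $(I - \pi_{l(\mu)})\widehat{\underline{U}}$ along $R_\mu$ is precisely what matches the quadratic growth $\chi_\mu \gtrsim |s|^2$ of the imaginary phase to produce an $\ep^{\frac{1}{2}}$ gain, and that nothing sharper is available without an additional polarization hypothesis beyond $(I - \mathds{E}_0)\mathcal{U} = 0$.
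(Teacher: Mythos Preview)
Your proposal is correct and follows essentially the same route as the paper's proof. The paper dispatches the first estimate in one line by citing the approximate-projector identity $\mathds{E}^2-\mathds{E}=O(\ep^{3/2})$ from proposition~\ref{PQE} (as made rigorous in proposition~\ref{smooth-EQ}), and for the second it simply says that the estimate ``comes from $(I-\mathds{E}_0)\mathcal{U}=0$ upon Taylor expanding the Fourier coefficients of $\mathds{E}$ in a neighbourhood of $R$''; your three-case decomposition and use of lemmas~\ref{maslov-estimate1} and~\ref{maslov-estimate2-small} with $k=1$, $q=2$ is exactly the detailed unpacking of that sentence, and your final remark on why $\ep^{1/2}$ is the sharp gain is apt.
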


\begin{proof}
The first claim is a consequence of the first estimate in proposition \ref{PQE} strengthened in proposition \ref{smooth-EQ}; the second one comes from $(I-\mathds{E}_0)\mathcal{U}=0$ upon Taylor expanding the Fourier coefficients of $\mathds{E}$ in a neighbourhood of $R$ in the expression for $\mathcal{U}_0$.
\end{proof}

Let us, now, consider the solvability condition $\mathds{E} N(\mathcal{U}_0)(t,x,\phi/\ep)=O(\ep^{\frac{1}{2}})$ which is well-posed since the image of $N$ belongs to the domain of the approximate projector $\mathds{E}$ in view of assumption \ref{rect}. 

\begin{proposition}
\label{te1}
If $\mathcal{U} \in PC_\osc^\infty(R;\C^N)$ solves the system \eref{te} and $\underline{\mathcal{U}}$ is defined as in proposition \ref{furbata2}, then $\mathcal{U}_0 = \mathds{E} \underline{\mathcal{U}}$ fulfills the estimates $(I-\mathds{E}) \mathcal{U}_0(t,x,\phi/\ep) = O(\ep^{\frac{3}{2}})$ and $\mathds{E} N(\mathcal{U}_0)(t,x,\phi/\ep) = O(\ep^{\frac{1}{2}})$.
\end{proposition}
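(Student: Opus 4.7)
The estimate $(I-\mathds{E})\mathcal{U}_0(t,x,\phi/\ep)=O(\ep^{3/2})$ is contained in Proposition \ref{furbata2} and needs no further work. The plan is therefore entirely devoted to the solvability estimate $\mathds{E} N(\mathcal{U}_0)(t,x,\phi/\ep)=O(\ep^{1/2})$, and I would carry it out by the same Fourier-coefficient-by-coefficient strategy used in Propositions \ref{PQE}, \ref{smooth-EQ}, and \ref{modulo}. By Assumption \ref{rect}, $N(\mathcal{U}_0)\in PC^\infty_\osc$, so the only relevant harmonics are those in $\Sigma_\osc$, and each Fourier coefficient of $\mathds{E} N(\mathcal{U}_0)$ reads $\pi(t,x,g,\gamma)\widehat{N(\mathcal{U}_0)}(t,x,g,\gamma)$ evaluated at $z=\phi/\ep$.

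Next I would split the spectrum into two disjoint regimes. If $(g,\gamma)\notin\bigcup_\mu\Sigma_\mu$, then $\gamma$ has at least two nonzero components; since the $R_\mu$ are pairwise disjoint (Assumption \ref{new:ref-man}), $\mathrm{Im}\Psi(g,\gamma;\phi)=\langle\gamma,\chi\rangle$ is bounded below by a positive constant on the compact $\ol{\Omega}$, and Lemma \ref{maslov-estimate1} delivers the $O(\ep^\infty)$ bound. If instead $(g,\gamma)\in\Sigma_\mu$, the coefficient reads $\sum_\ell\omega_{\mu,\ell}\pi_{l(\mu)}\widehat{N(\mathcal{U}_0)}(t,x,g,\gamma)$; outside $\mathrm{supp}(\omega_{\mu,\ell})$ it vanishes identically, on that support but away from a fixed neighborhood of $R_\mu$ Lemma \ref{maslov-estimate1} again gives $O(\ep^\infty)$, and in a neighborhood of $R_\mu$ the bound $\chi_\mu\geq c|s|^2$ (Remark \ref{new:applic}) lets me invoke Lemma \ref{maslov-estimate2-small} with $q=2$, $k=1$, which yields the desired $O(\ep^{1/2})$ \emph{provided} the coefficient vanishes on $R_\mu$. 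Because $\omega_{\mu,\ell}\equiv 1$ near $R_\mu$, that vanishing reduces to $\pi_{l(\mu)}\widehat{N(\mathcal{U}_0)}\big|_{R_\mu}=0$, equivalently $\mathds{E}_0 N(\mathcal{U}_0)\big|_R=0$.

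The heart of the argument is then to verify that this last identity is exactly what the transport equation \eref{te} encodes. I would work it out in the quasilinear case as follows. First, $\mathcal{U}_0|_R=\mathds{E}_0\underline{\mathcal{U}}|_R=\mathds{E}_0\mathcal{U}=\mathcal{U}$, using $(I-\mathds{E}_0)\mathcal{U}=0$ together with $\omega_{\mu,\ell}\equiv 1$ near $R_\mu$. Second, because \eref{new:eikonal} gives $\chi_\mu(t,s)=\tfrac{1}{2}\mathrm{Im}\Phi_{\mu,\ell}(t)s^2+O(s^3)$, one has $\partial_x\chi_\mu=0$ on $R_\mu$, hence $\partial_x\phi_\mu=\partial_x\ol\phi_\mu=\partial_x\varphi_\mu$ there; combined with the algebraic identity $(\partial_{z_\mu}+\partial_{\ol z_\mu})\mathcal{U}=\partial_{\theta_\mu}\mathcal{U}$ obtained from \eref{zeta_der}--\eref{zetabar_der}, the four nonlinear derivative terms of $\mathscr{B}_0$ collapse, when restricted to $R$, to the single $B(t,x,\mathcal{U})\partial_\theta\mathcal{U}$. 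Third, strict hyperbolicity gives $\pi_{l(\mu)}(A_0-\lambda_{l(\mu),0}I)=0$, so $\pi_{l(\mu)}L_0\mathcal{U}_0=\pi_{l(\mu)}V_\mu\mathcal{U}_0$; since $V_\mu$ is tangent to $R_\mu$, this quantity on $R_\mu$ depends only on $\mathcal{U}_0|_{R_\mu}=\mathcal{U}$ and is independent of the chosen extension $\underline{\mathcal{U}}$. The remaining non-derivative contributions of $\mathscr{B}_0$ are by definition $C(t,x,\mathcal{U})\mathcal{U}$, so the three items assemble into $\mathds{E}_0\mathscr{B}_0(\mathcal{U}_0)\big|_R=\mathds{E}_0\big[L_0+B(t,x,\mathcal{U})\partial_\theta+C(t,x,\mathcal{U})\big]\mathcal{U}=0$ by \eref{te}. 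The semilinear case ($B=0$, $C(t,x,\mathcal{U})\mathcal{U}=F(t,x,\mathcal{U})$) goes through along the same lines, the second step being trivially absent.

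The hard part will be the clean execution of the second and third items above. The second step requires turning the quadratic vanishing of $\chi_\mu$ on $R_\mu$ into pointwise identities for all four derivative terms of $\mathscr{B}_0$ simultaneously; this is technical but essentially algebraic. The third step is the more subtle one, since it demands that the extension dependence implicit in $\partial_x\mathcal{U}_0$ be absorbed by the projection $\pi_{l(\mu)}$—this is a genuine use of strict hyperbolicity via $\pi_{l(\mu)}A_0=\lambda_{l(\mu),0}\pi_{l(\mu)}$, and it is what makes the transport equation a well-posed object intrinsically defined on the one-dimensional submanifold $R_\mu$.
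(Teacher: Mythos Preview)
Your proposal is correct and follows essentially the same route as the paper: the first estimate is imported from Proposition~\ref{furbata2}, and the second is obtained by showing that the Fourier coefficients of $\mathds{E}N(\mathcal{U}_0)$ vanish on $R$ (equivalently, reduce to the transport operator applied to $\mathcal{U}$), whence the usual splitting via Lemmas~\ref{maslov-estimate1} and~\ref{maslov-estimate2-small} gives $O(\ep^{1/2})$. Your explicit treatment of the $L_0$ term via $\pi_{l(\mu)}L_0=\pi_{l(\mu)}V_\mu$ and tangency of $V_\mu$ to $R_\mu$ makes rigorous what the paper leaves implicit when it asserts that the coefficients ``amount exactly'' to those of $\mathds{E}_0[L_0+B\partial_\theta+C]\mathcal{U}$; the paper records the same $\partial_x\chi_\mu=O(|s|)$ observation you use for collapsing the nonlinear derivative terms.
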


\begin{proof}
By proposition \ref{furbata2} we already know that $(I-\mathds{E})\mathcal{U}_0 = O(\ep^{\frac{3}{2}})$, for every solution $\mathcal{U}$. As for the remaining estimate, we look at the Fourier coefficients of $\mathds{E}N(\mathcal{U}_0)$ and note that, when they are evaluated on $R$, they amount exactly to the coefficients of
\begin{equation*}
\mathds{E}_0\big[L_0 + B(t,x,\mathcal{U})\partial_\theta + C(t,x,\mathcal{U})\big] \mathcal{U},
\end{equation*}
and, thus, when $\mathcal{U}$ is a solution of \eref{te}, the coefficients of $\mathds{E}N(\mathcal{U}_0)$ are $O(|s|)$ near $R$. The nonlinear terms in $\mathscr{B}_0(\mathcal{U}_0)$, in particular, amounts to 
\begin{equation*}
B(t,x,\mathcal{U}_0)\partial_{\theta} \mathcal{U}_0 + \sum_\mu (\partial_uA \mathcal{U}_0 +\partial_{\ol{u}}A \ol{\mathcal{U}}_0)\partial_x \chi_\mu \partial_{y_\mu}\mathcal{U}_0,
\end{equation*}
and the coefficients of the differential operator $\sum_\mu d_x\chi_\mu \partial_{y_\mu}$ are $-\sum_\mu d_x\chi_\mu\gamma_\mu=d_x\mathrm{Im}\Psi=O(|s|)$ near $R$. Then the estimate is proved via usual arguments.
\end{proof}

Therefore, one can construct the extended profiles by solving the Cauchy problem for the system \eref{te}; specifically, one has the following result.

\begin{proposition}
\label{profiles-construction}
Let us assume that the Cauchy problem \eref{te} is well-posed in $PC^\infty_\osc(R;\C^N)$ and define the profiles as in section \ref{subsec:profiles}. Then, equations \eref{eq1-nonlinear} and \eref{eq2-nonlinear} are satisfied and $\mathcal{U}_{0|t=0}(x,\phi_{|t=0}/\ep) = \mathcal{H}(x, \phi_{|t=0}/\ep) + O(\ep^{\frac{1}{2}})$.
\end{proposition}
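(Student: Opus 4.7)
The plan is to verify the three conclusions separately: the two equations are essentially algebraic consequences of the operator identities in proposition \ref{PQE} (valid in $PC^\infty_\osc$ by proposition \ref{smooth-EQ}), whereas the initial-condition estimate requires a Fourier-mode analysis.

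For \eref{eq1-nonlinear}, since $\mathcal{U}_0 = \mathds{E}\underline{\mathcal{U}}$ and $P\mathds{E} = O(\ep^{3/2})$, the bound is immediate. For \eref{eq2-nonlinear}, I substitute $\mathcal{U}_1 = -Q N(\mathcal{U}_0)$ and apply $PQ = I - \mathds{E} + O(\ep^{1/2})$:
\begin{equation*}
P\mathcal{U}_1 + N(\mathcal{U}_0) = -(I-\mathds{E})N(\mathcal{U}_0) + N(\mathcal{U}_0) + O(\ep^{1/2}) = \mathds{E} N(\mathcal{U}_0) + O(\ep^{1/2}),
\end{equation*}
and the right-hand side is $O(\ep^{1/2})$ by proposition \ref{te1} because $\mathcal{U}$ solves \eref{te} by hypothesis. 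Here assumption \ref{rect} enters only to certify $N(\mathcal{U}_0) \in PC^\infty_\osc$, so that $Q$ and $\mathds{E}$ may act on it.

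For the initial condition, I combine the uniform estimate $\mathcal{U}_0 - \underline{\mathcal{U}} = O(\ep^{1/2})$ from proposition \ref{furbata2} with the Cauchy data of \eref{te}, which forces $\underline{\mathcal{U}}_{|t=0}(x,z) = \mathcal{H}(x,z)$ on $R^o$. The remaining task is to bound $\underline{\mathcal{U}}_{|t=0}(x,\phi_{|t=0}/\ep) - \mathcal{H}(x,\phi_{|t=0}/\ep)$ uniformly on $\ol{X^o}$, which I would do by splitting its Fourier expansion into two classes. For $(g,\gamma) \in \Sigma_\mu$ the coefficient difference vanishes at each $x^o_{\mu,\ell}$ and $\chi_\mu(0,\cdot)$ vanishes there quadratically by assumption \ref{Rzero}, so lemma \ref{maslov-estimate2-small} with $k=1$ and $q=2$ gives $O(\ep^{1/2})$; for $(g,\gamma) \notin \bigcup_\mu \Sigma_\mu$ at least two entries $\gamma_\mu, \gamma_{\mu'}$ are positive, and the zero sets of the corresponding $\chi_\mu(0,\cdot)$ are disjoint by assumption \ref{Rzero}, so $\mathrm{Im}\Psi > 0$ on all of $\ol{X^o}$ and lemma \ref{maslov-estimate1} yields $O(\ep^\infty)$. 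Rapid Fourier decay of the coefficients of $\underline{\mathcal{U}}$ and $\mathcal{H}$ makes the resulting sum absolutely convergent. The main subtlety is precisely this spectral split; once it is recognized, the two Maslov-type lemmas handle the rest mechanically.
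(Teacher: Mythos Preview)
Your proposal is correct and follows essentially the same approach as the paper: both invoke proposition \ref{te1} together with the operator identities of proposition \ref{PQE}/\ref{smooth-EQ} for \eref{eq1-nonlinear} and \eref{eq2-nonlinear}, and both reduce the initial-data estimate to the Maslov-type lemmas via the fact that $\underline{\mathcal{U}}_{|t=0}$ and $\mathcal{H}$ agree on $R^o$. The only organizational difference is that the paper routes the initial-data comparison through $\mathds{E}_{|t=0}\mathcal{H}$ and condition~\ref{pol} (so that $(I-\mathds{E}_{|t=0})\mathcal{H}$ is supported away from $R^o$ and hence $O(\ep^\infty)$), while you go directly via proposition~\ref{furbata2} and the explicit spectral split; the underlying estimates are the same.
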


\begin{proof}
Let $\mathcal{U} \in PC^\infty_\osc(R;\C^N)$ be the solution of the Cauchy problem \eref{te}. In virtue of propositions \ref{smooth-EQ} and \ref{te1}, $\mathcal{U}_0 =\mathds{E}\underline{\mathcal{U}}$ is such that $P\mathcal{U}_0 = O(\ep^{\frac{3}{2}})$ and $\mathds{E} N(\mathcal{U}_0) = O(\ep^{\frac{1}{2}})$, uniformly in $\ol{\Omega}$. Corollary \ref{amplitudes-formal} and proposition \ref{smooth-EQ} ensure that the profile $\mathcal{U}_1 = - Q N(\mathcal{U}_0)$ satisfies equation \eref{eq2-nonlinear} in the compact domain $\ol{\Omega}$. 

Finally, we have $\mathcal{U}_{0|t=0}(x,\phi_{|t=0}/\ep) = \mathds{E}_{|t=0} \mathcal{H}(x,\phi_{|t=0}/\ep) + O(\ep^{\frac{1}{2}})$, and $(I-\mathds{E}_{|t=0})\mathcal{H}=0$ in virtue of condition \ref{pol}. 
\end{proof}

\section{Existence of Profiles}
\label{existence}

In this section we shall address the well-posedness of the Cauchy problem for the nonlinear transport equation \eref{te}. The strategy is to reduce the system \eref{te} to the transport equation for periodic profiles and apply the classical results of standard nonlinear geometric optics.

As in section \ref{extension}, for every compact manifold $\ol{\Omega}$, we can argue that an element of $PC_\osc^\infty(\ol{\Omega})$, regarded as a function $\mathcal{U}(t,x,\theta,r)$, $\theta = \mathrm{Re}(z)$ and $r =\mathrm{Im}(z)$, can be extended to $\mathcal{U}(t,x,\theta,w)$ with $w \in \C^m$ and $\mathrm{Re}(w_\mu) = r_\mu \geq 0$. On evaluating the extended function for $w = -i\theta'$, $\theta'\in\T^m$, we obtain the periodic profile 
\begin{equation}
\label{rot1}
U(t,x,\theta,\theta') = \mathcal{U}(t,x,\theta,-i\theta')\in C^\infty(\ol{\Omega}\times \T^{m} \times \T^m), 
\end{equation}
which has been used in section \ref{extension} in order to represent the coefficients $\widehat{U}(t,x,g,\gamma)$. The argument can clearly be inverted: any periodic profile $U(t,x,\theta,\theta') \in C^{\infty}(\ol{\Omega}\times \T^m\times \T^m)$ such that $\widehat{U}(t,x,g,h)=0$ when $(g,h) \not\in \Sigma_\osc$ (the spectrum of oscillatory profiles) can be extended to a smooth function $U(t,x,\theta,w)$ with $w \in \C^m$ and $\mathrm{Im}(w_\mu) = r_\mu \geq 0$; then, we can set
\begin{equation} 
\label{rot2}
\mathcal{U}(t,x,z) = U(t,x,\theta,ir), \qquad z=\theta +ir,
\end{equation}
and this belongs to $PC_\osc^\infty$. The mappings \eref{rot1} and \eref{rot2} are one the inverse of the other, therefore, we have an injective linear map 
\begin{equation}
\label{injection}
j : \mathcal{U}(t,x,z) \mapsto U(t,x,\theta,\theta'), 
\end{equation}
which is continuous in the $C^\infty$-topology and the image $jPC_\osc^\infty$ is a closed subspace of $C^\infty(\ol{\Omega}\times \T^m\times \T^m)$. The projector $\mathds{P}$ into the image $jPC_\osc^\infty$ of the injection amounts to the Fourier multiplier with coefficients
\begin{equation}
\label{projector-P}
\mathscr{P}(g,h) = \left\{
\begin{aligned}
&I, \qquad &\text{when $(g,h) = (g,\gamma) \in \Sigma_\osc$},\\
&0, &\text{otherwise}.
\end{aligned}\right.
\end{equation}

\begin{proposition}
\label{reduction}
The complex profile $\mathcal{U} \in PC_\osc^\infty(R;\C^N)$ satisfies the system \eref{te} iff $U = j\mathcal{U} \in C^\infty(R \times \T^m\times \T^m;\C^N)$ satisfies
\begin{equation}
\label{transport-equation-nonlinear}
\left\{
\begin{aligned}
& (I-\mathds{F})U = 0,\\
& \mathds{F} \big[L_0 + B(t,x,U)\partial_\theta + C(t,x,U) \big] U = 0,\qquad (t,x)\in R,\\
& U_{|t=0}(x,\theta,\theta') = H(x,\theta,\theta'), \qquad x \in R\cap X^o,
\end{aligned}\right.
\end{equation}
where $\mathds{F} = \mathds{P}\mathds{E}_0=\mathds{E}_0\mathds{P}$ and $H=j\mathcal{H}$.
\end{proposition}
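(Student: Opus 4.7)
The strategy is to verify, operator by operator, that the injection $j$ from \eqref{injection} intertwines each ingredient of the complex system \eqref{te} with its counterpart in \eqref{transport-equation-nonlinear}. Since $j : PC^\infty_\osc(R;\C^N) \to jPC^\infty_\osc(R;\C^N)$ is a linear bijection onto a closed subspace characterized by the projector $\mathds{P}$ of \eqref{projector-P}, such intertwining immediately yields the claimed equivalence of the two systems.

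First I would handle the projector condition. The operator $\mathds{E}_0$ is a Fourier multiplier whose coefficients $\pi(t,x,g,\gamma)$ are defined on the spectrum $\Sigma_\osc$; extended by zero outside $\Sigma_\osc$, it acts as a Fourier multiplier on $C^\infty(R\times\T^m\times\T^m;\C^N)$ and coincides with both $\mathds{P}\mathds{E}_0$ and $\mathds{E}_0\mathds{P}$, that is, with $\mathds{F}$. Since the Fourier coefficients of $U=j\mathcal{U}$ vanish outside $\Sigma_\osc$, the identity $(I-\mathds{E}_0)\mathcal{U}=0$ is coefficient-wise equivalent to $(I-\mathds{F})U=0$; the initial condition $U_{|t=0}=H$ follows from $H=j\mathcal{H}$.

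For the transport PDE on $R$, the principal part $L_0=\partial_t+A_0(t,x)\partial_x$ depends only on $(t,x)$ and so commutes with $j$. The $\theta$-derivatives, by \eqref{prof_der}, act as multiplication of the Fourier coefficient at $(g,\gamma)$ by $ig_\mu$ on both sides. Multiplication by smooth functions of $(t,x)$ commutes with $j$, and the polynomial nonlinearities in $\mathcal{U}$ and $\bar{\mathcal{U}}$ defining $B$ and $C$ transport via the convolution formula for products in $PC^\infty$ derived in section \ref{extension}. Combined with the intertwining of $\mathds{E}_0$ with $\mathds{F}$, this yields the identity $j\{\mathds{E}_0[L_0+B(t,x,\mathcal{U})\partial_\theta+C(t,x,\mathcal{U})]\mathcal{U}\}=\mathds{F}[L_0+B(t,x,U)\partial_\theta+C(t,x,U)]U$ on $R$, which gives the equivalence of the two transport equations.

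The main obstacle is the interpretation of the conjugation appearing in $B$ and $C$. Pointwise complex conjugation of the periodic function $U$ does not commute with $j$: the Fourier coefficient of $\bar U$ at $(g,\gamma)\in\Sigma_\osc$ is $\overline{\widehat{U}(-g,-\gamma)}$, which vanishes when $\gamma\neq 0$, whereas the coefficient of $j\bar{\mathcal{U}}$ at the same $(g,\gamma)$ is $\overline{\widehat{U}(-g,\gamma)}$. Consistently with the geometric-optics identity $\overline{e^{i\phi/\ep}}=e^{-i\varphi/\ep-\chi/\ep}$ (the imaginary part of the phase is preserved by conjugation), one must interpret the conjugation on the periodic side as $\bar U:=j(\overline{j^{-1}U})$, that is, the involution on $jPC^\infty_\osc$ flipping only the $g$-index of the Fourier coefficients. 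With this convention fixed, which is the natural one for viewing $U$ as a representative of $\mathcal{U}$, the Fourier-coefficient matching for the nonlinear terms is routine.
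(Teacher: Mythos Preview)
Your argument follows the same route as the paper's proof: both show that the injection $j$ conjugates the nonlinear operator on the complex-profile side with its periodic counterpart, and that $j$, $j^{-1}$, $\mathds{P}$ commute with $\mathds{E}_0$, so that applying $j$ (respectively $j^{-1}$) carries one system to the other. The paper condenses all of this into the single assertion that $N_1$ and $N_2$ are conjugated by $j$, whereas you unpack it ingredient by ingredient ($L_0$, $\partial_\theta$, multiplication by smooth coefficients, polynomial products).

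The one point on which you go further than the paper is the treatment of complex conjugation in $B$ and $C$. The paper's proof simply writes $B(t,x,U)$, $C(t,x,U)$ and asserts the conjugation relation $jN_1=N_2j$ without comment; you correctly observe that naive pointwise conjugation of $U$ on $\T^m\times\T^m$ flips the sign of \emph{both} Fourier indices $(g,\gamma)\mapsto(-g,-\gamma)$, which does not match $j(\overline{\mathcal{U}})$ (where only $g$ flips). Your resolution---defining $\overline{U}$ on the periodic side as $j(\overline{j^{-1}U})$, i.e., the involution that flips only the $g$-index---is the correct and natural convention, and makes explicit what the paper leaves implicit. This is a genuine clarification rather than a different method.
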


\begin{proof}
Let $N_1(\mathcal{U}) = L_0\mathcal{U} + B(t,x,\mathcal{U}) \partial_\theta \mathcal{U} + C(t,x,\mathcal{U})\mathcal{U}$ and $N_2(U)  = \mathds{P} [L_0 + B(t,x,U)\partial_\theta  + C(t,x,U)] U$. First, we note that $N_1$ and $N_2$ are conjugated by $j$, that is, $j N_1(\mathcal{U}) = N_2(j\mathcal{U})$ and, when $U$ belongs to the range of $j$, $j^{-1} N_2(U) = N_1(j^{-1}U)$. In addition, $j$, $j^{-1}$ and the projector $\mathds{P}$ commute with $\mathds{E}_0$. Therefore, by acting with $j$ on the system \eref{te} and recalling that $j\mathcal{U} = \mathds{P}j\mathcal{U}$ one finds \eref{transport-equation-nonlinear} with $U=j\mathcal{U}$. Vice versa, by acting on \eref{transport-equation-nonlinear} by $j^{-1}$ and, on recalling that a solution $U$ must belong to $jPC_\osc^\infty$, one finds the system \eref{te}.
\end{proof}

The transport equation \eref{transport-equation-nonlinear} is the restriction to $R$ of the classical transport equation for periodic profiles on $\ol{\Omega}$ studied by Joly, M\'etivier and Rauch \cite{JMR1} with few slight modifications. This gives the existence of the solution. In what follows we give a somewhat concise discussion of the well-posedness for \eref{transport-equation-nonlinear}, based on the work of Joly, M\'etivier and Rauch \cite{JMR1,JMR2} on the existence of periodic profiles.

Let us consider \eref{transport-equation-nonlinear} on each connected component $R_{\mu,\ell}$ of $R$. For $(t,x)\in R_{\mu,\ell}$, the coefficients of $\mathds{F}$ are equal to $\pi_{l(\mu)}(t,x)$ when $(g,\gamma) \in \Sigma_\mu$ and equal to zero otherwise ($\pi(t,x,g,\gamma)$ is supported away from $R_\mu$ if $(g,\gamma) \in \Sigma_\osc \setminus \Sigma_\mu$ and $\mathscr{P}(g,\gamma)=0$ if $(g,\gamma) \not = \Sigma_\osc$); hence, $\mathds{F}$ evaluated on $R_{\mu,\ell}$ is a projector. As a consequence, $(I-\mathds{F})U=0$ implies that $\widehat{U}(t,x,g,\gamma)=0$ when $(t,x) \in R_{\mu,\ell}$, $(g,\gamma) \not \in \Sigma_\mu$ and $(I-\pi_{l(\mu)}(t,x)) \widehat{U}(t,x,g,\gamma)=0$ when $(t,x) \in R_{\mu,\ell}$ and $(g,\gamma) \in \Sigma_\mu$. It follows that the non-zero coefficients of $\mathds{F} L_0 U$ are
\begin{multline*}
\pi_{l(\mu)} (t,x) \big(\partial_t + A_0(t,x)\partial_x \big) \pi_{l(\mu)} (t,x) \widehat{U}(t,x,g,\gamma) \\
= \pi_{l(\mu)}(t,x) V_\mu(t,x) \widehat{U} + \text{lower order terms},
\end{multline*}
where $(t,x)\in R_{\mu,\ell}$ and the smooth vector field $V_\mu(t,x) = \partial_t + \lambda_{l(\mu),0}(t,x)\partial_x$ is tangent to $R_{\mu,\ell}$.

By using the time coordinate $t \in [0,T]$ to parametrize each $R_{\mu,\ell}$, the nonlinear transport equation \eref{transport-equation-nonlinear} splits into a set of initial value problems for $U \in C^\infty([0,T];\T^n)$ of the form
\begin{equation*}
\left\{
\begin{aligned}
& (I - \mathds{F})U(t,\theta) = 0,\\
& \mathds{F}\big[\partial_t + B(t,U)\partial_\theta + C(t,U)\big]U(t,\theta)=0,\\
& U_{|t=0}(\theta) = H(\theta) \in C^\infty(\T^n).
\end{aligned}\right.
\end{equation*}
The only difference, here, with respect to the standard theory of hyperbolic symmetric systems on the torus $\T^n$, \cite{T3}, is the presence of the projector $\mathds{F}$. On the other hand, with respect to the systems considered by Joly, M\'etivier and Rauch \cite{JMR1,JMR2}, the restriction to the reference manifold $R$ has eliminated the spatial degree of freedom.

For this simple case, the space $\mathscr{E}^s(T)$ of periodic profiles \cite{JMR1,JMR2} can be taken to be $C([0,T];H^s(\T^n))$, $H^s(\T^n)$ being the standard Sobolev distributions on $\T^n$ with index $s \in \R$. With the norm
\begin{equation*}
\|U\|_{\mathscr{E}^s(T)} = \sup_{0\leq t \leq T} \|U(t)\|_{H^s(\T^n)},
\end{equation*}
$\mathscr{E}^s(T)$ is a Banach space. In addition, the properties of $H^s$ imply that, for $s >n/2$, $\mathscr{E}^s(T)$ is a Banach algebra embedded in $L^\infty([0,T]\times \T^n)$ in which the composition with smooth functions $F(t,x,U)$ is continuous and bounded, [6-8, 15]. 

The projector $\mathds{F}$ is continuous on $\mathscr{E}^s(T)$ and for every $t\in [0,T]$ the operator $\mathds{F}(t)$ obtained by freezing the time coordinate in the coefficients is a projector in $H^s(\T^n)$ orthogonal with respect to the $L^2(\T^n)$ product. The range of $\mathds{F}$ in $\mathscr{E}^s(T)$ is denoted $\mathscr{N}^s(T)$ and the condition $(I - \mathds{F})U =0$, $U \in \mathscr{E}^s(T)$ is equivalent to $U \in \mathscr{N}^s(T)$.

The well-posedness of such systems can be proved by means of Picard iterates along the usual lines. For sake of completeness we outline the main points without proofs: details can be found in [6-8]. First, one considers the linearized system
\begin{equation}
\label{linearized}
\left\{\begin{aligned}
&U \in \mathscr{N}^s(T),\\
&\mathds{F}\big[\partial_t + B(t,V) \partial_\theta + C(t,V)\big] U =0,\\
&U_{|t=0} = H \in \mathscr{N}^s(0)=H^s(\T^n),
\end{aligned}\right.
\end{equation}
with $V,G \in \mathscr{E}^{s}(T)$ being given.

\begin{theorem}
\label{linearized-wp}
If $V \in \mathscr{E}^s(T)$ and $H \in \mathscr{N}^s(0)$, with $s > n/2 + 1$, then the linearized system \eref{linearized} has a unique solution $U\in\mathscr{N}^s(T)$ with
\begin{equation}
\label{linear-estimate}
\|U(t)\|_{H^{s'}}^2 \leq e^{Ct}\|H\|_{H^{s'}}^2,
\end{equation}
for every $t \in [0,T]$, $s' \leq s$; here $C = 1 + \|\partial_x\cdot A\| + M (\|V\|+\|\partial_\theta V\|)$ with norms taken in $L^\infty([0,T] \times \T^n)$.
\end{theorem}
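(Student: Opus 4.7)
The plan is to apply the classical Friedrichs energy method for linear symmetric hyperbolic systems on the torus, using three structural ingredients: (i) $B(t,V)$ is Hermitian, being built from the differentials $(\partial_u A,\partial_{\ol u}A)$ of the Hermitian matrix $A$ (and $B=0$ in the semilinear case); (ii) for every $t \in [0,T]$, $\mathds{F}(t)$ is a Fourier multiplier in $\theta \in \T^n$ whose values are Hermitian spectral projectors of $\C^N$ (namely $\pi_{l(\mu)}$ when $(g,\gamma) \in \Sigma_\mu$, and $0$ otherwise), so that $\mathds{F}(t)^*=\mathds{F}(t)$ in $L^2(\T^n;\C^N)$ and $\mathds{F}(t)$ commutes with every $\partial_\theta^\alpha$; (iii) thanks to $s>n/2+1$ and $V\in\mathscr{E}^s(T)$, the standard Moser/composition inequalities apply to $B(t,V)$ and $C(t,V)$ as multipliers on $H^{s'}(\T^n)$ for $s' \leq s$.

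First I would derive the a priori bound \eqref{linear-estimate}. Applying $\partial_\theta^\alpha$, $|\alpha|\leq s'$, to \eqref{linearized} and writing $U_\alpha = \partial_\theta^\alpha U$, using $[\partial_\theta^\alpha,\mathds{F}]=0$, the equation becomes
\[
\mathds{F}\bigl[\partial_t U_\alpha + B(t,V)\partial_\theta U_\alpha + C(t,V) U_\alpha\bigr] = R_\alpha,
\]
with $R_\alpha = -\mathds{F}\bigl([\partial_\theta^\alpha,B(t,V)]\partial_\theta U + [\partial_\theta^\alpha,C(t,V)]U\bigr)$ controlled in $L^2$ by $M\bigl(\|V\|+\|\partial_\theta V\|\bigr)\|U\|_{H^{s'}}$ via Moser. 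Pairing with $U_\alpha$ in $L^2(\T^n;\C^N)$, using $U_\alpha=\mathds{F} U_\alpha$ (inherited from $U=\mathds{F} U$), $\mathds{F}^*=\mathds{F}$ and the Hermitian character of $B(t,V)$, a single integration by parts gives
\[
2\operatorname{Re}\bigl(B(t,V)\partial_\theta U_\alpha, U_\alpha\bigr)_{L^2} = -\bigl((\partial_\theta B(t,V))U_\alpha, U_\alpha\bigr)_{L^2},
\]
so the principal part contributes only at zeroth order. Adding the contribution of $(\partial_t \mathds{F})U_\alpha$, which is an $L^2$-bounded operator on $U_\alpha$, and summing over $|\alpha|\leq s'$, Gr\"onwall's lemma applied to $\|U(t)\|_{H^{s'}}^2$ yields \eqref{linear-estimate} with the announced $C$.

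Existence is then obtained by a Friedrichs--mollified Galerkin scheme at wavenumber $N$, solving
\[
\partial_t U_N = -\mathds{F}(t)J_N\bigl[B(t,V)\partial_\theta + C(t,V)\bigr]J_N U_N + (\partial_t\mathds{F}(t))U_N, \qquad U_{N|t=0}=\mathds{F}(0)J_N H,
\]
in the finite-dimensional space $\mathds{F}(t)J_N L^2(\T^n;\C^N)$; the correction $(\partial_t \mathds{F})U_N$ is precisely what is needed for $U_N(t)=\mathds{F}(t)U_N(t)$ to be preserved in time, as one checks by differentiating the constraint. The a priori estimate applied to $U_N$ is uniform in $N$, and a weak-$*$ limit provides $U\in L^\infty([0,T];H^s)\cap\mathscr{N}^s(T)$; strong continuity in time is obtained by a Bona--Smith argument, and uniqueness follows from the $s'=0$ version of the a priori estimate applied to the difference of two solutions. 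The main obstacle, familiar from constrained symmetric hyperbolic systems, is the simultaneous handling of $(\partial_t\mathds{F})U_\alpha$ and of the non-commutation of $B(t,V)$ with $\partial_\theta^\alpha$ at top order while keeping the scheme inside $\mathscr{N}^s(T)$; both are controlled by the smoothness in $t$ and the rapid decay in $(g,\gamma)$ of the Fourier coefficients of $\mathds{F}$ already proved in Section \ref{smooth-solvability}.
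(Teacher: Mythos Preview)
The paper does not actually prove this theorem. Immediately before stating it, the text says ``For sake of completeness we outline the main points without proofs: details can be found in [6--8]'' (i.e., the Joly--M\'etivier--Rauch papers and Rauch's lectures), and no argument is supplied after the statement. So there is no paper proof to compare against; the result is simply quoted from the standard theory of symmetric hyperbolic transport equations with a projection constraint.

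Your sketch is exactly the argument those references contain, specialized to the present situation (no spatial variable after restriction to $R$, only torus variables), and it is sound. The three structural points you isolate are the right ones: Hermitian $B$, self-adjoint Fourier-multiplier projector $\mathds{F}(t)$ commuting with $\partial_\theta^\alpha$, and Moser estimates available because $s>n/2+1$. One small remark: in the a~priori estimate itself you do not need to introduce $(\partial_t\mathds{F})U_\alpha$ by hand, since pairing the equation with $U_\alpha=\mathds{F}U_\alpha$ and using $\mathds{F}^*=\mathds{F}$ already turns $(\mathds{F}\partial_t U_\alpha,U_\alpha)$ into $(\partial_t U_\alpha,U_\alpha)$; the $(\partial_t\mathds{F})$ correction is only needed, as you correctly say later, in the Galerkin/mollified scheme to keep the iterates inside the time-dependent constraint space. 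Apart from this cosmetic point, the argument is the standard one the paper is invoking.
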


Then, one construct the solution of \eref{transport-equation-nonlinear} as the limit of a sequence $U_1$, $U_2,\ldots$, $U_\nu,\ldots$ of Picard's iterates: let us set $U_1(t,\theta) = H(\theta) \in \mathscr{E}^s(T)$, with $H \in \mathscr{N}^s(0)$ and $s >n/2+1$, then let $U_\nu \in \mathscr{E}^s(T)$ for $\nu \geq 2$ be the solution of
\begin{equation}
\label{iterates}
\left\{\begin{aligned}
&U_\nu \in \mathscr{N}^s(T),\\
&\mathds{F} \big[\partial_t + B(t,U_{\nu-1})\partial_\theta + C(t,U_{\nu-1})\big]U_\nu =0,\\
&U_{\nu|t=0} = H \in \mathscr{N}^s(0).
\end{aligned}\right.
\end{equation}
Iterative use of theorem \ref{linearized-wp} yields a sequence $U_\nu \in \mathscr{N}^s(T)$, $\nu \geq 1$.

\begin{theorem}
\label{Picard}
Let $H\in \mathscr{N}^s(0)$ and $s>n/2+1$. Then, there exists $t_* \in [0,T]$, such that the sequence $U_\nu$ is bounded in $\mathscr{N}^s(t_*)$ and converges in $\mathscr{N}^{s-1}(t_*)$ to $U$ which is the unique solution of the considered Cauchy problem. Moreover, $t_*$ can be made independent on $s$, so that, if $H\in \mathscr{N}^\infty(0)$, $U \in C^\infty(\ol{\Omega}\times \T^m)$.
\end{theorem}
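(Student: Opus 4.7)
The plan is to follow the standard nonlinear hyperbolic scheme (see \cite{JMR1,JMR2}), suitably adapted to account for the orthogonal projector $\mathds{F}$, and to exploit the fact that the energy constant in \eref{linear-estimate} involves only the $L^\infty$ norms of $V$ and $\partial_\theta V$, which are controlled by the $H^s$ norm when $s>n/2+1$.

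First, I would establish a uniform bound on the Picard iterates. Applying Theorem \ref{linearized-wp} to \eref{iterates} with $V=U_{\nu-1}$ and $s'=s$, I get $\|U_\nu(t)\|_{H^s}^2\leq e^{C_{\nu-1}t}\|H\|_{H^s}^2$, with $C_{\nu-1}$ bounded by a polynomial $p(\|U_{\nu-1}\|_{\mathscr{E}^s(T)})$ in view of the Sobolev embedding $H^s\hookrightarrow W^{1,\infty}$ (valid since $s>n/2+1$). Choose $M>\|H\|_{H^s}$ and then pick $t_*\in(0,T]$ small enough so that $e^{p(M)t_*}\|H\|_{H^s}^2\leq M^2$. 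A straightforward induction on $\nu$ then yields $\|U_\nu\|_{\mathscr{E}^s(t_*)}\leq M$ for all $\nu\geq 1$, so the sequence is bounded in $\mathscr{N}^s(t_*)$.

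Second, I would show the sequence is contractive in the lower-index space $\mathscr{N}^{s-1}(t_*)$. Writing $W_\nu=U_{\nu+1}-U_\nu$, one checks that $W_\nu\in\mathscr{N}^{s-1}(t_*)$ satisfies
\begin{equation*}
\mathds{F}\bigl[\partial_t + B(t,U_\nu)\partial_\theta + C(t,U_\nu)\bigr] W_\nu = \mathds{F}\, G_\nu,\qquad W_{\nu|t=0}=0,
\end{equation*}
where $G_\nu$ depends polynomially on $U_\nu$, $U_{\nu-1}$ and involves the difference $W_{\nu-1}$; using that $H^{s-1}$ is a Banach algebra (again thanks to $s-1>n/2$) and the composition estimates for smooth functions of elements of $\mathscr{E}^{s-1}$, one gets $\|G_\nu(t)\|_{H^{s-1}}\leq K\|W_{\nu-1}(t)\|_{H^{s-1}}$ with $K=K(M)$. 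Applying the $s'=s-1$ version of the energy estimate \eref{linear-estimate} with a forcing term (a standard extension of Theorem \ref{linearized-wp}) and Gr\"onwall give $\|W_\nu\|_{\mathscr{E}^{s-1}(t_*)}\leq \kappa\,\|W_{\nu-1}\|_{\mathscr{E}^{s-1}(t_*)}$ with $\kappa<1$ after possibly shrinking $t_*$ further. Hence $\{U_\nu\}$ is Cauchy in $\mathscr{N}^{s-1}(t_*)$ and converges to some $U\in\mathscr{N}^{s-1}(t_*)$.

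Third, the uniform $\mathscr{N}^s(t_*)$ bound and the convergence in $\mathscr{N}^{s-1}(t_*)$ allow one to pass to the limit in \eref{iterates}: continuity of $B$, $C$ as Nemytskii operators on $\mathscr{E}^{s-1}(t_*)$ gives that $U$ solves \eref{transport-equation-nonlinear} in the restricted setting. A weak-$*$ compactness argument promotes $U$ to $L^\infty([0,t_*];H^s)$, and the now-classical Bona--Smith style continuity argument upgrades this to $U\in C([0,t_*];H^s)=\mathscr{N}^s(t_*)$. Uniqueness follows by the same energy estimate on the difference of two solutions.

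Finally, to see that $t_*$ can be taken independent of $s$, I would fix $s_0>n/2+1$, obtain $t_*$ and the solution $U\in\mathscr{N}^{s_0}(t_*)$ as above, and then, for any $s\geq s_0$, apply the energy estimate \eref{linear-estimate} at level $s$ with $V=U$: the constant $C$ only involves $\|U\|$ and $\|\partial_\theta U\|$ in $L^\infty$, which are already controlled on $[0,t_*]$ by the $\mathscr{E}^{s_0}$-norm. Hence $\|U(t)\|_{H^s}\leq e^{Ct_*/2}\|H\|_{H^s}$, so the $H^s$ regularity is propagated on the whole interval $[0,t_*]$ without any further shrinking. Iterating over $s\to\infty$ yields $U\in\mathscr{N}^\infty(t_*)=C^\infty([0,t_*]\times\T^m)$ whenever $H\in\mathscr{N}^\infty(0)$. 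The main technical obstacle is the presence of the projector $\mathds{F}$: one must verify that the Picard iterates stay in the range $\mathscr{N}^s$ and that $\mathds{F}$ commutes suitably with the energy-type operators used in deriving \eref{linear-estimate}; this is precisely where the $L^2$-orthogonality of $\mathds{F}(t)$ is essential.
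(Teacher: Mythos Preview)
Your proposal is correct and follows the classical Picard-iterate scheme for symmetric hyperbolic systems (uniform high-norm bound, low-norm contraction, passage to the limit, $s$-independent lifespan via the $L^\infty$-only dependence of the energy constant). Note that the paper itself does not give a proof of this theorem: it explicitly states that the main points are outlined ``without proofs'' and refers to \cite{JMR1,JMR2} for details, so there is no in-paper argument to compare against---your sketch is precisely the standard argument from those references, adapted to the presence of the projector $\mathds{F}$, which is exactly what the paper intends.
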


Let us redefine $T$ to be the maximum between the existence time $t_*$ for the profiles and the old $T$, then, theorem \ref{Picard}, proposition \ref{reduction} and proposition \eref{profiles-construction} imply the existence of complex geometric optics profiles in $\ol{\Omega} \times \T^m_c$. 

\begin{corollary}
\label{solution-profile}
There exists a class of equivalence of complex geometric optics profiles $\mathcal{U}_0$ and $\mathcal{U}_1$ satisfying the conditions stated in proposition \eref{profiles-construction}.
\end{corollary}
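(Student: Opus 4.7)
The plan is to string together the preceding propositions. First I would solve the nonlinear transport equation \eqref{te} on the reference manifold $R$ by passing through the isomorphism $j$ of \eqref{injection}. By Proposition \ref{reduction}, a solution $\mathcal{U} \in PC_\osc^\infty(R;\C^N)$ of \eqref{te} corresponds bijectively to a solution $U = j\mathcal{U} \in jPC_\osc^\infty \subset C^\infty(R\times \T^m \times \T^m;\C^N)$ of the periodic system \eqref{transport-equation-nonlinear}, and \eqref{transport-equation-nonlinear} splits, on each connected component $R_{\mu,\ell}$, into a Cauchy problem of the type handled by Theorems \ref{linearized-wp} and \ref{Picard}. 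Condition \ref{pol} on the initial datum $h^\ep$ translates into $(I-\mathds{F})H_{|R\cap X^o}=0$, i.e.\ $H_{|R_{\mu,\ell}\cap X^o} \in \mathscr{N}^\infty(0)$, so the hypotheses of Theorem \ref{Picard} are met componentwise.

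Applying Theorem \ref{Picard} on each $R_{\mu,\ell}$ (and enlarging $T$ if necessary, as noted just before the statement of the corollary) produces a smooth solution $U$ of \eqref{transport-equation-nonlinear} on $R\times \T^m\times \T^m$. Since the iteration scheme \eqref{iterates} keeps $U_\nu$ in $\mathscr{N}^s$, the limit lies in $jPC_\osc^\infty(R;\C^N)$, hence $\mathcal{U} = j^{-1}U \in PC_\osc^\infty(R;\C^N)$ is the sought solution of \eqref{te}. I then choose any smooth extension $\underline{\mathcal{U}} \in PC_\osc^\infty(\ol{\Omega};\C^N)$ of $\mathcal{U}$ to a compact neighbourhood of $R$ (such an extension exists coefficientwise, using a cutoff adapted to the tubular neighbourhoods $\mathcal{O}_{\mu,\ell}$), and set
\begin{equation*}
\mathcal{U}_0 = \mathds{E}\,\underline{\mathcal{U}}, \qquad \mathcal{U}_1 = -Q\, N(\mathcal{U}_0),
\end{equation*}
where $\mathds{E}$ and $Q$ are the Fourier multipliers defined in \eqref{new:E} and \eqref{new:Q}. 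Proposition \ref{profiles-construction} then yields exactly the two estimates
$P\mathcal{U}_0 = O(\ep^{3/2})$ (with the correct matching of the initial datum up to $O(\ep^{1/2})$) and the solvability identity on which $\mathcal{U}_1$ is built, so equations \eqref{eq1-nonlinear} and \eqref{eq2-nonlinear} hold uniformly in $\ol{\Omega}$.

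Finally, the equivalence class statement is handled by Proposition \ref{modulo}: replacing $\mathds{E}$, $Q$ or the phase $\phi$ by any equivalent representatives (in the sense of \eqref{equivalence} and \eqref{FM-equivalence}, with $k=2$ and $k=0$ respectively), and replacing $\underline{\mathcal{U}}$ by another smooth extension of $\mathcal{U}$ to a neighbourhood of $R$, changes $\mathcal{U}_0$ and $\mathcal{U}_1$ only by quantities that are $O(\ep^{1/2})$ uniformly in $\ol{\Omega}$ after evaluation at $z=\phi/\ep$. Hence all such choices yield the same equivalence class of approximate profiles satisfying the conditions of Proposition \ref{profiles-construction}. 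The only genuine step of substance is the invocation of Theorem \ref{Picard}, and within that the main delicate point is that the Picard iterates are constrained to the kernel of $I-\mathds{F}$; this is handled by the standard argument of Joly--M\'etivier--Rauch once one observes, as in the discussion preceding Theorem \ref{linearized-wp}, that on each $R_{\mu,\ell}$ the operator $\mathds{F}$ reduces to the time-dependent orthogonal projector $\pi_{l(\mu)}$ on those Fourier modes $(g,\gamma) \in \Sigma_\mu$. The rest is bookkeeping.
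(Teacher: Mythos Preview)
Your proof is correct and follows exactly the route the paper takes: the corollary is stated immediately after the sentence combining Theorem~\ref{Picard}, Proposition~\ref{reduction} and Proposition~\ref{profiles-construction}, and your argument spells out precisely this chain (with the additional bookkeeping on equivalence classes via Proposition~\ref{modulo}). One small slip: where you write ``enlarging $T$ if necessary'' you mean \emph{shrinking} $T$ to the existence time $t_*$ supplied by Theorem~\ref{Picard}.
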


\section{Proof of the Main Results}
\label{sec:proof-main}

With respect to the corresponding result in the linear theory, proposition 2.5 of Part I, the proof is shorter as all the detail on the local structure of the wave field around the reference manifold $R$ have been already accounted for in proposition \ref{PQE} and its strengthened version, proposition \ref{smooth-EQ}.

\begin{proof}[Proof of proposition \ref{main}]
First, we address the initial values. The lowest-order nonlinear complex geometric optics solution $v^\ep$ restricted to $t=0$ is such that
\begin{align*}
u_{|t=0}^\ep (x) - v^\ep_{|t=0}(x) &= \ep^{p} \Big[ \sum h_\mu (x) e^{i\psi_\mu(x)/\ep} - \mathcal{H}\big(x,\phi_{|t=0}(x)/\ep\big)\Big] + O(\ep^{p+\frac{1}{2}})\\
&= \ep^p \sum h_\mu(x) \Big[e^{i\psi_\mu(x)/\ep}- e^{i \phi_{\mu|t=0}(x)/\ep}\Big] + O(\ep^{p+\frac{1}{2}}), 
\end{align*}
uniformly in $\ol{X^o}$, with $p=1$ and $p=0$ in the quasilinear and semilinear case, respectively. We can apply to each term in the sum the same argument used in the proof of assertion a) of proposition 2.5 in Part I and we readily obtain the claimed estimate.

As for assertion b), we note that the remainders in equation \eref{new:formal-expansion} are smooth on $\ol{\Omega}$, thus, in particular, they are bounded in $\ol{\Omega}$ and yields a contribution of order $O(\ep^{p+1})$. Then, by a straightforward application of corollary \ref{solution-profile} and proposition \eref{profiles-construction}, we can find a solution of equation \eref{eq1-nonlinear} and \eref{eq2-nonlinear}, with the result that $L(t,x,v^\ep,\partial v^\ep) = O(\ep^{p+\frac{1}{2}})$.
\end{proof}

\section*{Acknowledgments}
This work has been supported partly by the CNISM, at the Department of Physics ``A.~Volta'' of the Pavia University (Italy), under the grant ``Propagazione di onde ad alta frequenza in mezzi dispersivi e disomogenei: dalla teoria dei sistemi dinamici alle applicazioni'' and by the Foundation Blanceflor Boncompagni-Ludovisi at the Max-Planck-Institut f\"ur Plasmaphysik (IPP), Garching bei M\"unchen (Germany). I thank the Theory Division of the IPP and, in particular, G.~V.~Pereverzev and E.~Poli for their kind hospitality and collaboration. I am greatly indebted to M.~Bornatici for his continuous encouragement and tutorship. Especially, I wish to thank C.~Dappiaggi for so many discussions, suggestions and for carefully reading the manuscript.

\end{document}